\newcommand{\spc}{\quad \quad \quad}
\newcommand{\s}{{\mathfrak{s}}}
\def\be{\begin{equation}}
\def\ee{\end{equation}}
\def\beq{\begin{eqnarray}}
\def\eeq{\end{eqnarray}}
\theoremstyle{definition}
\theoremstyle{theorem}
\newtheorem{theorem}{Theorem}
\theoremstyle{corollary}
\begin{document}
\title{Extending Israel-Stewart theory: Causal bulk viscosity at large gradients}
\author{ L.~Gavassino}
\affiliation{
Department of Mathematics, Vanderbilt University, Nashville, TN, USA
}

\begin{abstract}
We present a class of relativistic fluid models for cold and dense matter with bulk viscosity, whose equilibrium equation of state is polytropic. These models reduce to Israel-Stewart theory for small values of the viscous stress $\Pi$. However, when $\Pi$ becomes comparable to the equilibrium pressure $P$, the evolution equations ``adjust'' to prevent the onset of far-from-equilibrium pathologies that would otherwise plague Israel-Stewart. Specifically, the equations of motion remain symmetric hyperbolic and causal at all times along any continuously differentiable flow, and across the whole thermodynamic state space. This means that, no matter how fast the fluid expands or contracts, the hydrodynamic equations are always well-behaved (away from singularities). The second law of thermodynamics is enforced exactly. Near equilibrium, these models can accommodate an arbitrarily complicated dependence of the bulk viscosity coefficient $\zeta$ on both density and temperature.
\end{abstract}

\maketitle

\section{Introduction}\label{Introne}

If a fluid expands infinitely slowly, collisional processes keep its internal degrees of freedom in equilibrium at every instant \cite[\S 11]{landau5}. In this limit, the dynamical pressure (i.e. the diagonal component of the stress tensor)  coincides with the thermodynamic pressure $P$, whose value is dictated by the equation of state \cite[\S 12]{landau5}. If, instead, the expansion rate is finite, it takes time for the internal degrees of freedom to adjust to the new size, and the fluid is driven out of equilibrium \cite{Tisza_Bulk}. Consequently, the dynamical pressure becomes $P{+}\Pi$, for some non-equilibrium correction $\Pi$ \cite[\S 81]{landau6}. The tendency of a fluid to form a non-zero $\Pi$ in response to an expansion rate is called ``bulk viscosity''.

The phenomenon of bulk viscosity has attracted increasing interest in the relativity community in recent years \cite{HuangBulk2010sa,BulkGavassino,Chabanov:2021dee,Camelio2022,CamelioSimulations2022,CeloraBulk2022nbp,GavassinoBurgers2023,GavassinoFarFromBulk2023}, due to its possible relevance to neutron star merger dynamics \cite{Sawyer_Bulk1989,Alford_2010,AlfordBulk,AlfordRezzolla,MostAlfordNoronhaBulk2021zvc,YangNoronha2023ozd}, homogeneous cosmology \cite{Weinberg1971,Salmonson1991,Otting3_1999,Maartens1995,Zimdahl1996,Fabris2006,Colistete2007}, and heavy-ion collision modeling \cite{Song:2009rh,Dusling:2011fd,PaquetDenicolBulk2015,Hydro+2018}. Together with the renewed interest, also came the realization that our standard framework for modeling bulk viscosity in relativity, the Israel-Stewart theory \cite{Israel_Stewart_1979,Hishcock1983}, is not as mathematically well-behaved as we initially thought \cite{Causality_bulk,Plumberg2022}, thereby motivating the search for new theories \cite{BemficaDNDefinitivo2020}. 

Let us briefly summarize the problem with the Israel-Stewart theory. In relativity, it is important that the equations of motion of a dissipative system do not propagate information faster than light (principle of \textit{causality} \cite{Bemfica2017TheFirst}), otherwise some violent instabilities due to backward dissipation will necessarily arise \cite{GavassinoSuperluminal2021}. In the case of IS theory with bulk viscosity, the viscous stress $\Pi$ is an independent degree of freedom of the system, and the speed of propagation of information $w$ (as measured in the fluid's local rest frame) depends on $\Pi$ as follows \cite{Causality_bulk}:
\begin{equation}
w^2=\mathcal{Z}_1 +\dfrac{\mathcal{Z}_2}{\rho {+}P{+}\Pi} \, ,
\end{equation}
where $\rho$ is the energy density, and $\mathcal{Z}_i$ are two coefficients (defined in the appendix of \cite{Causality_bulk}) that do not depend on $\Pi$. For causality, one needs that $w^2 \in [0,1]$ (note that $c \,{=}\, 1$, in our units). But then the problem is clear: If $\Pi \rightarrow -(\rho {+}P)$, the above formula diverges, and causality necessarily breaks down.

Of course, this type of causality violation occurs only when $\Pi$ is very large (and negative), and it may be tempting to argue that, in this limit, we should not trust the theory anyway. This is certainly true in principle. In practice, however, ``large-$\Pi$ situations'' are more common than one may think. For example, in heavy-ion collision simulations, causality \textit{is known} to break down at early times due to large viscous stresses \cite{Plumberg2022}. In viscous cosmology, the bulk stress $\Pi$ may become large (and negative) during fast expansion ages, like the inflation phase \cite{MaartensInflation:1996dk}, or in the proximity of cosmological singularities such as the Big Bang or the Big Rip. More importantly, hydrodynamics is known to form shocks \cite{christodoulou2017shock}, implosions \cite{MerleImplosion}, and all sorts of intrinsically non-linear phenomena (see \cite{DisconziReview2023rtt} and references therein), where $\Pi$ can become large in localized regions of space, even if the initial data is regular \cite{DisconziBreakdown2020ijk}. 

These inherently practical issues motivate the current search for theories of viscosity that remain well-behaved also at large $\Pi$ \cite{MaartensInflation:1996dk,BemficaDNDefinitivo2020}, which makes them safe for numerical implementation. It is within this set of ideas that the present article finds a place. Here, we provide a class of General Relativistic bulk viscous polytopes that ``extrapolate'' Israel-Stewart theory at large $\Pi$, in a way that avoids all the mathematical pathologies of the latter. These models are designed to survive the worst situations the Universe could throw at us. They can be initialized at arbitrarily early time in a Bjorken flow, or they can be evolved all the way to the Big Rip. No matter how hardly we strain them, they to do not give in to stress. The equations of motion have all the properties one could hope for: causality, stability, well-posedness, and strict obedience to the second law of thermodynamics. Finally, they are simple enough to be approachable analytically, but at the same time sufficiently rich to be attractive for numerical simulations.

Throughout the article, we adopt the metric signature $(-,+,+,+)$, and work in natural units $c=k_B=1$.

\newpage
\section{Overview of the model}\label{overview}
\vspace{-0.2cm}

In this section, we provide all the relevant equations of our bulk viscous polytropic fluids, and we briefly discuss their physical interpretation.

\vspace{-0.2cm}
\subsection{Constitutive relations}\label{sec:constitutiverelazia}
\vspace{-0.2cm}

Our polytropes are rheological models, which exhibit both a viscoelastic and a pseudoplastic behavior \cite{GavassinoFarFromBulk2023}. The degrees of freedom of the system are 4 fields, $\Psi(x^\alpha)\,{=}\,\{u^\mu(x^\alpha) ,\s(x^\alpha), n(x^\alpha),\Pi(x^\alpha)\}$, representing respectively the material four-velocity (with $u^\mu u_\mu=-1$), the specific entropy (i.e. the entropy per baryon), the baryon density, and the non-equilibrium part of the pressure. In the following, it will be assumed that $\s$, $n$, and $P+\Pi$ are strictly positive. To specify a model, we need to fix 4 strictly positive constant parameters, $\{m,K,\Gamma,a\}$, representing respectively the constituents' mass, the polytropic constant, the adiabatic index, and a viscosity parameter. We shall prove that consistency of the model with thermodynamics and causality requires that $\Gamma \in (1,2)$, $a \in (0,1)$, and $a{+}\Gamma{-}1 \in (0,1)$. The allowed space of values of $\{\Gamma,a\}$ is shown in figure \ref{fig:allowedaG}. 
\begin{figure}[b]
    \centering
    \includegraphics[width=0.6\linewidth]{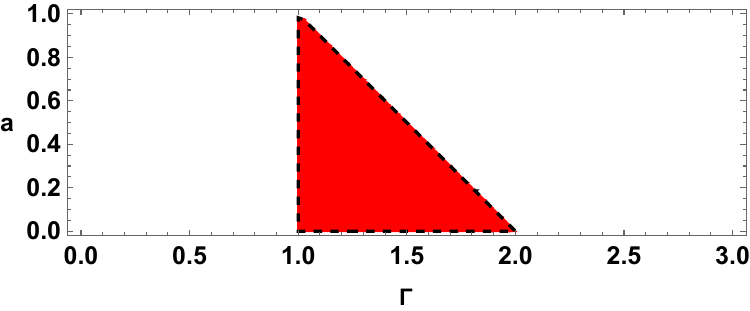}
    \caption{Values of the parameters $\{\Gamma,a \}$ that are compatible with thermodynamics and causality (the boundary is excluded). Models falling inside the red triangle are well-behaved across the entire thermodynamic phase space, i.e. whenever $\s$, $n$, and $P{+}\Pi$ are positive. Models that fall outside the red triangle may still be causal and thermodynamically consistent within some restricted range of values of $\{\s,n,\Pi \}$. However, they will break down under ``extreme conditions'' (e.g., for very large $n$ or $\Pi$).}
    \label{fig:allowedaG}
\end{figure}

With the above definitions in mind, we can now define the constitutive relations of the model. These are 4 equations, which express the pressure $P$, the energy density $\rho$, the bulk relaxation time $\tau$, and the bulk viscosity $\zeta$ in terms of the basic fields $\{\s(x^\alpha),n(x^\alpha),\Pi(x^\alpha) \}$. Such constitutive relations are provided below:
\begin{equation}\label{EoS}
\begin{split}
P={}& Kn^\Gamma \, , \\
\rho= {}& mn\big[1+e_\text{th}(\s) \big]+  \bigg[\dfrac{1}{\Gamma{-}1} + Q(\Pi/P) \bigg]P \, , \\
\tau={}& \tau(\s,n,\Pi) \, , \\
\zeta = {}&  \bigg[a+(a+\Gamma) \dfrac{\Pi}{P} \bigg] P\tau \, , \\
\end{split}
\end{equation}
where the function $Q: (-1,+\infty)\rightarrow [0,+\infty)$ and its derivative $Q'$ are given by\footnote{The argument of $Q$ must not go below $-1$, because we are assuming that $P{+}\Pi>0$, which implies $\Pi/P>-1$.}, respectively,
\begin{equation}\label{thefunctionQ}
Q(x)=\dfrac{x+\dfrac{a}{\Gamma{-}1}\bigg[ \dfrac{1}{(1{+}x)^{\frac{\Gamma-1}{a}}}-1 \bigg]}{a{+}\Gamma{-}1}\, , \quad \spc
Q'(x)=\dfrac{1-\dfrac{1}{{(1{+}x)^{\frac{a+\Gamma-1}{a}}}}}{a{+}\Gamma{-}1}\, .
\end{equation}
In equation \eqref{EoS}, the functions $e_\text{th}(\s)$ and $\tau(\s,n,\Pi)$ are arbitrary, but they should both be strictly positive and of class $C^1$ (for all $\s{>}0$, $n{>}0$, and $\Pi{>}{-}P$). Furthermore, also $e_\text{th}'(\s)$ and $e_\text{th}''(\s)$ should be strictly positive. In fact, the quantity $T=me_\text{th}'>0$ is just the temperature, while the quantity $c_v=e_\text{th}'/e_\text{th}''$ is the specific heat at constant volume. An example of an acceptable function is $e_\text{th}(\s)=b\, \s^2$, for some constant $b>0$.


\subsection{Properties of the energy density}\label{PropOfQ}
\vspace{-0.2cm}

Let us have a closer look at the formula for the energy density $\rho(\s,n,\Pi)$, as given in equation \eqref{EoS}. The equilibrium zero-temperature part, namely $mn+P/(\Gamma{-}1)$, is the usual ``textbook'' energy density of cold relativistic polytropes \cite{rezzolla_book}. The thermal correction $mne_\text{th}(\s)$ is not the most general piece one could have, but it should suffice for most purposes (in our setting, the entropy term only serves the purpose of ``storing'' the heat produced by bulk dissipation). The non-equilibrium term $PQ(\Pi/P)$ is novel and deserves in-depth explanation. 

We know from thermodynamics that the state of local equilibrium of the fluid (i.e. the state $\Pi=0$) is the state that minimizes $\rho$ at constant $n$ and $\s$ \cite[\S 5.1]{Callen_book}. Therefore, the term $PQ$ quantifies the energy cost of pushing the fluid out of local equilibrium. The shape of the function $Q(x)$ is shown in figure \ref{fig:xqx}, for some selected choices of parameters. The general features of $Q(x)$ and their physical meaning are summarized below.
\begin{itemize}
\item[(a)] The minimum energy principle requires that $Q$ be non-negative definite, and $Q=0$ only for $x=0$. This is indeed the case for the function \eqref{thefunctionQ}, provided that $\{\Gamma,a\}$ falls within the red region in figure \ref{fig:allowedaG}. Indeed, for small $x$, we have that $Q(x)=\frac{x^2}{2a}+\mathcal{O}(x^3)$, confirming that $\Pi=0$ is the minimum of $Q(\Pi/P)$, since $a>0$. 
\item[(b)] When $x{\rightarrow} -1^+$, the function $Q(x)$ diverges to $+\infty$ (again, for good choices of $\Gamma$ and $a$, see figure \ref{fig:allowedaG}). This suggests that $P+\Pi$ will ``refuse'' to become negative in these fluids, as that would require infinite energy.  
\item[(c)] If $x\rightarrow +\infty$, the function $Q(x)$ becomes infinitely large. This is reasonable, since generating infinite pressure should cost infinite energy. In particular, we have the asymptotic scaling $Q \sim x/(a{+}\Gamma{-}1)$.
\item[(d)] In the next section, we will show that consistency of the model with thermodynamics requires that $xQ'(x)\geq 0$, and $a(1{+}x)Q'+(\Gamma{-}1)Q-x=0$. Both are true as long as $\{\Gamma,a \}$ falls in the red triangle in figure \ref{fig:allowedaG}. The differential equation above constrains the form of $Q(x)$, so that equation \eqref{thefunctionQ} is forced upon us. Note that, if we changed the form $\zeta(P,\Pi)$, thermodynamics would produce a different differential equation, giving a new $Q$.
\end{itemize}

\begin{figure}[h!]
    \centering
    \includegraphics[width=0.5\linewidth]{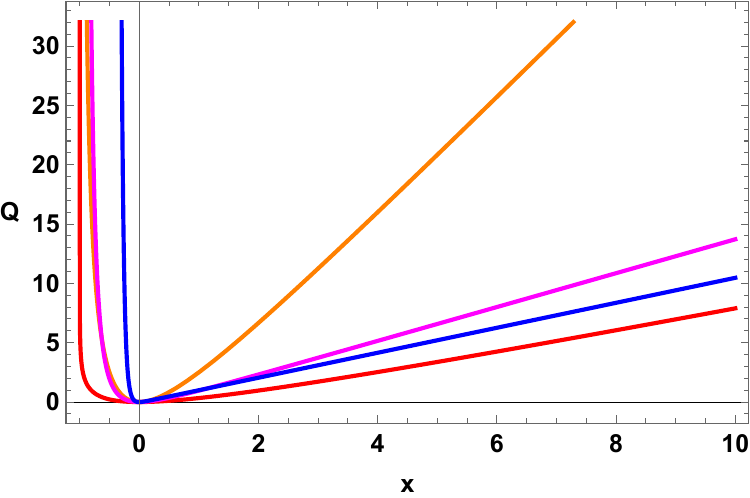}
\caption{Graph of the function $Q(x)$, defined in equation \eqref{thefunctionQ}. The quantity $PQ(\Pi/P)$ is the change in energy density associated with a non-equilibrium displacement (at fixed $n$ and $\s$). The different curves correspond to different choices of values of $\{\Gamma,a\}$, respectively, orange: $\{1.1,0.1\}$, red: $\{1.1,0.89\}$, magenta: $\{1.5,0.2\}$, blue: $\{1.9,0.05\}$.}
    \label{fig:xqx}
\end{figure}

\subsection{Equations of motion}
\vspace{-0.2cm}

Since the degrees of freedom of the fluid model are the fields $\Psi=\{u^\mu,\s,n,\Pi \}$, we need to specify how each of these fields evolves in time. We postulate the following equations of motion (defining $\Delta^{\mu \nu}=g^{\mu \nu}+u^\mu u^\nu$):
\begin{equation}\label{EoMs}
\begin{split}
u^\mu \nabla_\mu u^\nu ={}& - \dfrac{ \Delta^{\nu \mu} \nabla_\mu (P{+}\Pi)}{\rho {+}P{+}\Pi} \, , \\
u^\mu \nabla_\mu \s ={}& \dfrac{\Pi Q'(\Pi/P)}{nT\tau} \, , \\
u^\mu \nabla_\mu n ={}& - n\nabla_\mu u^\mu \, , \\
 u^\mu \nabla_\mu \Pi ={}& -\dfrac{\Pi}{\tau} -\dfrac{\zeta}{\tau} \nabla_\mu u^\mu \, . \\
\end{split}
\end{equation}
The first line is just the relativistic Euler equation, where the source of the acceleration is the \textit{total} pressure $P+\Pi$. The second line is a statement of the second law of thermodynamics, since $xQ'(x)$ is non-negative for all $x\,{>}\,{-}1$, by point (d) of section \ref{PropOfQ}. The third line is the continuity equation for baryons. Finally, the fourth line is the Israel-Stewart equation of motion for the bulk stress. We can also use the equations of motion for $\s$, $n$, and $\Pi$ to determine the evolution of $\rho$, which (thanks to the differential equation in point (d) of section \ref{PropOfQ}) matches the first law of thermodynamics:
\begin{equation}\label{energumeno}
u^\mu \nabla_\mu \rho = - (\rho{+}P{+}\Pi)\nabla_\mu u^\mu \, .
\end{equation}
We also note that the system \eqref{EoMs} is mathematically equivalent to the system of balance laws
\begin{equation}\label{BalanceLaws}
\nabla_\mu T^{\mu \nu}= 0 \, , \spc
\nabla_\mu J^\mu = 0 \, ,\spc
\tau T\nabla_\mu s^{\mu}= \Pi Q'(\Pi/P) \, ,
\end{equation}
where the stress-energy tensor $T^{\mu \nu}$, the baryon number current $J^\mu$, and the entropy current $s^\mu$ take the usual relativistic bulk-viscous form:
\begin{equation}\label{TheFluxioni}
T^{\mu \nu}= \rho u^\mu u^\nu +(P{+}\Pi)\Delta^{\mu \nu} \, , \spc
J^\mu = n u^\mu \, , \spc
s^\mu = \s n u^\mu \, . 
\end{equation}
The conservation law $\nabla_\mu T^{\mu \nu}=0$ implies that we can couple the system \eqref{EoMs} to Einstein's field equations, using $T^{\mu \nu}$ as the source of gravity. The fluid model is complete\footnote{Let us note that, if one forgets about $\s$, and just evolves $\{\rho,n,u^\mu,\Pi\}$ using \eqref{EoMs} and \eqref{energumeno}, the resulting equations have the same structure as in DNMR theory, with three differences: (a) the transport coefficients fulfill certain mathematical constraints, (b) the relaxation time $\tau$ is an arbitrary function of $\Pi$, and (c) the energy density must be initialized large enough, to be compatible with \eqref{EoS}. This implies that all current DNMR codes can be easily adjusted to evolve these polytropic models. The correspondence with DNMR also shows that the use of $\s$ as the independent variables (instead of $\rho$) was only a matter of convenience, which does not affect the physics.}.




\vspace{-0.4cm}
\subsection{A more symmetric representation}
\vspace{-0.4cm}

For later purposes, it will be convenient to rearrange the system \eqref{EoMs} in a more symmetric form. To this end, let us introduce the dimensionless quantities
$\Tilde{P}=\ln(P/P_0)$ (for some constant $P_0>0$), $\Tilde{\Pi}=\Pi/P$, and $\Tilde{\rho}=\rho/P$. Then, rather than evolving the original fields  $\Psi=\{u^\mu ,\s, n,\Pi\}$, we may evolve the dimensionless fields 
$\Tilde{\Psi}=\{u^\alpha,\s,\Tilde{P},\Tilde{\Pi}\}$. With this choice of variables, the system \eqref{EoMs} can be reorganized as follows:
\begin{equation}\label{SymmetrzedEoMs}
\begin{bmatrix}
(1{+}\Tilde{\Pi}{+}\Tilde{\rho})g\indices{^\nu _\alpha}u^\mu & 0 & (1{+}\Tilde{\Pi})\Delta^{\nu \mu} & \Delta^{\nu \mu}\\
0 & u^\mu & 0 & 0\\
(1{+}\Tilde{\Pi})\Delta\indices{^\mu _\alpha} & 0 & \dfrac{1{+}\Tilde{\Pi}}{\Gamma} u^\mu & 0\\
\Delta\indices{^\mu _\alpha} & 0 & 0 & \dfrac{u^\mu}{a(1{+}\Tilde{\Pi})}\\
\end{bmatrix}
\nabla_\mu \begin{pmatrix}
u^\alpha \\
\s \\
\Tilde{P} \\
\Tilde{\Pi}
\end{pmatrix} =
\begin{pmatrix}
0\\
\dfrac{P}{nT\tau} \Tilde{\Pi} Q'(\Tilde{\Pi})\\
0\\
 -\dfrac{\Tilde{\Pi}}{\tau a(1{+}\Tilde{\Pi})}\\
\end{pmatrix} \, .
\end{equation}


\vspace{-0.3cm}
\section{Near-equilibrium behavior}
\vspace{-0.3cm}

Before delving into the nuances of the large-$\Pi$ dynamics, let us confirm that the small-$\Pi$ limit is well-behaved. 

\vspace{-0.4cm}
\subsection{Recovering ordinary Israel-Stewart theory}
\vspace{-0.4cm}

First of all, let us show that these models are a natural ``large-$\Pi$ extrapolation'' of Israel-Stewart hydrodynamics.
To this end, we expand \eqref{EoS} in the limit of small $\Pi$, and we obtain
\begin{equation}
\begin{split}
P={}& Kn^\Gamma \, ,\\
\rho \approx {}& mn\big[1+e_\text{th}(\s) \big]+  \dfrac{P}{\Gamma{-}1} + \dfrac{\Pi^2}{2aP} \, , \\
\tau \approx {}& \tau(\s,n) \, , \\
\zeta \approx {}& a P \tau \, . \\
\end{split}
\end{equation}
These approximate constitutive relations can be used to derive the following approximate identities:
\begin{equation}
\dfrac{\partial^2 \rho}{\partial \Pi^2}\bigg|_{\Pi=0}=  \dfrac{1}{aP} \approx \dfrac{\tau}{\zeta} \, , \spc
T\nabla_\mu s^\mu \approx  \dfrac{\Pi^2}{aP\tau} \approx \dfrac{\Pi^2}{\zeta} \, . 
\end{equation}
These, together with the fourth line of \eqref{EoMs}, are the defining equations of the Israel-Stewart theory for bulk viscosity \cite{Israel_Stewart_1979,Hishcock1983,BulkGavassino,GavassinoFarFromBulk2023}, as claimed.

Let us also note in passing that these polytopic models are quite ``flexible'' when it comes to choosing $\zeta$ near equilibrium. In fact, we can always adjust the relaxation time $\tau(\s,n)$ to produce an arbitrarily complicated function $\zeta(\s,n)$, which can thus be matched with kinetic theory prescriptions.

\subsection{Stability of the equilibrium}

Since at small $\Pi$ these models become instances of Israel-Stewart theory, it should be possible to apply the Gibbs stability criterion \cite{GavassinoGibbs2021,GavassinoCausality2021,GavassinoGENERIC2022,GavassinoStabilityCarter2022} to prove that all global thermodynamic equilibria (in a fixed gravitational field) are covariantly stable \cite{GavassinoSuperluminal2021,GavassinoBounds2023}, if certain inequalities are satisfied. Indeed, we have the following theorem:
\begin{theorem}
If all the inequalities in section \ref{sec:constitutiverelazia} are obeyed, the information current, $E^\mu$, of the fluid is timelike future-directed and non-vanishing for any linear perturbation $\delta \Tilde{\Psi}\neq 0$.
\end{theorem}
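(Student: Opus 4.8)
The plan is to build the information current $E^\mu$ as (minus one half) the second variation of the off-shell potential $\Phi^\mu = s^\mu - \beta_\nu T^{\mu\nu} - (\mu/T)\,J^\mu$ about a global equilibrium, and then to prove the purely algebraic statement that $E^\mu$ is future-directed timelike for every $\delta\Tilde{\Psi}\neq 0$. First I would fix a global thermodynamic equilibrium: a state with $\Pi=0$ on which $\beta^\mu=u^\mu/T$ is a Killing vector and $\mu/T$ is constant, so that the entropy production vanishes and the first variation of $\Phi^\mu$ is identically zero. A vector is future-directed timelike and non-vanishing precisely when its contraction with every future-directed causal covector $\xi_\mu$ is strictly negative; hence the theorem is equivalent to the claim that the symmetric quadratic form $\delta\Tilde{\Psi}^{\mathsf{T}}(\xi_\mu \mathcal{A}^\mu)\,\delta\Tilde{\Psi}$ is strictly negative-definite for all such $\xi_\mu$, where $\mathcal{A}^\mu$ is the Hessian of $\Phi^\mu$. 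The decisive simplification is that, because the dynamics has already been written in the manifestly symmetric form \eqref{SymmetrzedEoMs}, this Hessian coincides---up to a single strictly positive thermodynamic weight that cannot affect the causal type---with the principal matrix of \eqref{SymmetrzedEoMs} evaluated at $\Tilde{\Pi}=0$.

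So I would contract that equilibrium matrix with an arbitrary future-directed timelike $\xi_\mu$, normalized so that $\xi_\mu\xi^\mu=-1$ and written as $\xi^\mu=\gamma(u^\mu+V^\mu)$ with $V^\mu u_\mu=0$ and $\gamma=(1-V_\mu V^\mu)^{-1/2}$. Using $\xi_\mu u^\mu=-\gamma$ and $\xi_\mu\Delta^{\nu\mu}=w^\nu$ with $|w|^2=\gamma^2-1$, the entropy perturbation $\delta\s$ and the components of $\delta u^\alpha$ transverse to $w^\mu$ decouple, contributing manifestly signed diagonal terms controlled by $\gamma>0$, $1{+}\Tilde{\rho}>0$, and thermal stability $c_v=e_\text{th}'/e_\text{th}''>0$. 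The only coupled block mixes $\{\delta u_\parallel,\delta\Tilde{P},\delta\Tilde{\Pi}\}$ (with $\delta u_\parallel=\hat{w}_\alpha\,\delta u^\alpha$), and negative-definiteness reduces to positive-definiteness of
\begin{equation}
-\,\xi_\mu\mathcal{A}^\mu\big|_{\text{sound}}=
\begin{pmatrix}
(1{+}\Tilde{\rho})\gamma & -|w| & -|w|\\
-|w| & \gamma/\Gamma & 0\\
-|w| & 0 & \gamma/a
\end{pmatrix}.
\end{equation}

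The Sylvester criterion then furnishes three conditions. The first minor, $(1{+}\Tilde{\rho})\gamma>0$, is automatic. The second, $(1{+}\Tilde{\rho})\gamma^2/\Gamma>|w|^2$, follows from $\Tilde{\rho}\geq 1/(\Gamma{-}1)$ together with $|w|^2=\gamma^2{-}1<\gamma^2$ and $\Gamma<2$. The full determinant is proportional to $(1{+}\Tilde{\rho})\gamma^2-(\gamma^2{-}1)(\Gamma{+}a)$, which, being affine in $\gamma^2$, is smallest in the ultrarelativistic limit $\gamma\to\infty$ (the null-cone boundary $|w|/\gamma\to 1$), where positivity collapses to the single inequality $\Tilde{\rho}\geq\Gamma+a-1$. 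Since $\Tilde{\rho}\geq 1/(\Gamma{-}1)$ with equality only in the high-density limit, the binding case is $1/(\Gamma{-}1)>\Gamma+a-1$; writing $g=\Gamma-1$ this is $g(g{+}a)<1$, which holds---strictly---because $\Gamma<2$ gives $g<1$ and $a{+}\Gamma{-}1<1$ gives $g{+}a<1$, whence $g(g{+}a)<g<1$. The strictness is what upgrades ``causal'' to ``timelike'' and guarantees $E^\mu\neq 0$.

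The hardest part is precisely this last step: verifying that the ultrarelativistic/null limit is the binding configuration uniformly over the entire thermodynamic state space, and checking that the Section~\ref{sec:constitutiverelazia} inequalities $\Gamma\in(1,2)$, $a\in(0,1)$, $a{+}\Gamma{-}1\in(0,1)$ are exactly---neither more nor less than---what is needed for strict negative-definiteness of $\xi_\mu\mathcal{A}^\mu$ on the whole future cone. A secondary subtlety is to confirm that the information current really does reduce to the symmetric principal matrix of \eqref{SymmetrzedEoMs} up to a positive factor; in particular, that the decoupled $\delta\s$ channel carries the correct positive weight, which is where thermal stability $c_v>0$ (guaranteed by $e_\text{th}''>0$) enters. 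Everything else is the routine bookkeeping of Sylvester's criterion.
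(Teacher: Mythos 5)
Your proposal reaches the paper's conclusion and runs, at its core, on the same machinery: construct $E^\mu$ by the Gibbs method, contract with causal covectors, apply Sylvester's criterion, and reduce everything to the single binding inequality $\Tilde{\rho}>a{+}\Gamma{-}1$, obtained (as in the paper) from $\Tilde{\rho}>1/(\Gamma{-}1)$ plus $\Gamma\in(1,2)$ and $a{+}\Gamma{-}1\in(0,1)$; your chain $g(g{+}a)<g<1$ is equivalent to the paper's slicker $\Tilde{\rho}>(\Gamma{-}1)^{-1}>1>a{+}\Gamma{-}1$. The organizational difference is twofold. First, the paper contracts $E^\mu$ with a \emph{single adapted null covector}: in the rest frame with $\delta u^\mu=(0,\delta u,0,0)$ it checks that $E^0-E^1$ is a positive-definite $4\times 4$ form, which suffices because $E^\mu$ in \eqref{EEE} lies in the plane spanned by $u^\mu$ and $\delta u^\mu$ (the condition $E^0+E^1>0$ follows for free from the symmetry $\delta u\to-\delta u$). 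Your sweep over all future timelike $\xi_\mu$ followed by the null limit is equivalent but more laborious; also, your claim that the determinant, being affine in $\gamma^2$, ``is smallest in the ultrarelativistic limit'' is literally true only when $1{+}\Tilde{\rho}<\Gamma{+}a$ (otherwise the minimum sits at $\gamma=1$) --- harmless, since the inequality you prove covers both cases.

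The step you must not leave as a ``secondary subtlety'' is the identification of the quadratic form $-\xi_\mu E^\mu$ with that of $-\xi_\mu\mathcal{M}^\mu$ at $\Tilde{\Pi}=0$: this is the load-bearing step of your whole argument, and the justification you give --- that \eqref{SymmetrzedEoMs} is ``manifestly symmetric'' --- does not establish it. Symmetric forms of a given system are far from unique: multiplying the decoupled $\s$-row/column by any positive function, or rescaling variables, preserves symmetry while changing the principal matrix, so no general principle forces the particular symmetrization \eqref{SymmetrzedEoMs} to be the Hessian of the Gibbs potential. Here the identification happens to hold, and only modulo exactly the discrepancy you flag: the $(\delta\s)^2$ weight is $mne_\text{th}''/P$ in \eqref{EEE} versus $1$ in \eqref{SymmetrzedEoMs} (both positive, so nothing breaks). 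A self-contained proof must verify this by actually carrying out the second-variation computation --- which is precisely what the paper's Appendix \ref{appInfinuzzo} does to arrive at \eqref{EEE}. Relatedly, your potential $\Phi^\mu=s^\mu-\beta_\nu T^{\mu\nu}-(\mu/T)J^\mu$ has the wrong signs for the paper's $(-,+,+,+)$ signature: since $u_\nu T^{\mu\nu}=-\rho u^\mu$, the first variation of your $\Phi^\mu$ does \emph{not} vanish at equilibrium (one gets $u_\mu\delta\Phi^\mu=-2\,\delta(\s n)$), whereas the correct combination, as in \eqref{Adue!}, is $s^\mu+\alpha^\star J^\mu+\beta^\star_\nu T^{\mu\nu}$ with $\alpha^\star=\mu/T$ and $\beta^{\star}_\nu=u_\nu/T$. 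This sign slip is inert in your argument only because you never use the explicit potential --- which is another symptom of the fact that the crucial computation was bypassed rather than performed.
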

\begin{proof}
The information current can be easily ``guessed'' from the observation that these polytropic models belong to the universality class $(3,1,0){-}(2,1,0)$ \cite{GavassinoUniversalityI2023,GavassinoUniversalityII2023}. Alternatively, one may apply the full machinery of the Gibbs method \cite{GavassinoGibbs2021,GavassinoCausality2021} (see appendix \ref{appInfinuzzo}). Either way, the result is 
\begin{equation}\label{EEE}
TE^\mu = \bigg[ mne_\text{th}''(\delta\s)^2+\dfrac{P}{\Gamma} (\delta \Tilde{P})^2 +\dfrac{P}{a} (\delta \Tilde{\Pi})^2+ (\rho{+}P) \delta u^\nu \delta u_\nu \bigg]\dfrac{u^\mu}{2}  +P\big( \delta \Tilde{P} +\delta \Tilde{\Pi} \big) \delta u^\mu ,
\end{equation}
where $T=me_\text{th}'(\s)$ is positive by assumption (and $\Tilde{\Pi}=0$ in equilibrium). To show that $E^\mu$ is timelike future-directed, we work in a reference frame such that
$u^\mu=(1,0,0,0)$ and $\delta u^\mu=(0,\delta u,0,0)$, and check that the quadratic form
\begin{equation}
\dfrac{2T}{P}(E^0-E^1) =(\delta u, \delta \s, \delta \Tilde{P}, \delta \Tilde{\Pi}) \begin{bmatrix}
1{+}\Tilde{\rho} & 0 & -1 & -1\\
0 & \dfrac{mne_\text{th}''}{P} & 0 & 0\\
-1 & 0 & \dfrac{1}{\Gamma} & 0\\
-1 & 0 & 0 & \dfrac{1}{a}\\
\end{bmatrix}
\begin{pmatrix}
\delta u \\
\delta \s \\
\delta \Tilde{P} \\
\delta \Tilde{\Pi} \\
\end{pmatrix}
\end{equation}
is positive definite (recall that $P>0$). To verify this, we apply Sylvester's criterion \cite[Theorem 7.2.5]{horn_johnson_1985} to the square matrix above, and require that all its principal minors $M_i$ in the lower-right corner (the ``trailing principal minors'') be strictly positive. The first three minors are just products of the diagonal elements, and they are positive since $m$, $n$, $K$, $e_\text{th}''$, $\Gamma$, and $a$ are positive. The minor $M_4$ is the determinant of the matrix itself, whose value is
\begin{equation}
M_4 = \dfrac{mne''_\text{th}}{a\Gamma P}\big[\Tilde{\rho}-(a{+}\Gamma{-}1) \big] \, . 
\end{equation}
The square bracket is positive due to the following chain of inequalities:
$\Tilde{\rho}>(\Gamma{-}1)^{-1}>1>a{+}\Gamma{-}1$.
\end{proof}

The above theorem has far-reaching implications. In fact, we can invoke the arguments of \cite{Hishcock1983,GavassinoGibbs2021,GavassinoCasmir2022}, and conclude that all global thermodynamic equilibrium states (also rotating equilibria, accelerating equilibria, and equilibria in strong gravity) are covariantly stable and thermodynamically consistent.  Furthermore, we can invoke the theorems of \cite{GavassinoCausality2021,GavassinoUniversalityI2023} and conclude that the equations of motion linearised around equilibrium are symmetric hyperbolic and causal. Finally, we can invoke \cite{MullinsHippert2023tjg,Gavassino2024Fluctuazia} and conclude that the linear theory admits a well-defined stochastic generalization, and it remains causal and stable even in the presence of random thermal fluctuations. The small-$\Pi$ limit is well-behaved.







\section{Non-linear dynamics: Consistency Theorems}

We are finally entering the central part of the paper, where the main theorems concerning the non-linear dynamics of the fluid model are proven. In the following, it will be assumed that all the conditions on $\{m,K,\Gamma,a,e_\text{th},\tau \}$ given in section \ref{sec:constitutiverelazia} are respected, and they will \textit{not} be repeated in the statement of each theorem.

\subsection{Conservation of positivity}

The theorem that follows is ultimately what makes the present model superior to ordinary Israel-Stewart theory, and it was the end goal that guided us towards equations \eqref{EoS} and \eqref{thefunctionQ} in the first place. In a nutshell, the theorem says that, if the fields take ``physically sensible'' values at $t=0$, the same will also be true at later times. 


\begin{theorem}\label{theo1} 
Fix a solution $\Psi(x^\alpha)$ of \eqref{EoMs}, and let $x^\alpha(t) \in C^1(\mathbb{R}^+)$ be a fluid-element worldline, parameterized with the proper time $t$ (i.e. $dx^\alpha/dt=u^\alpha$). Suppose that the fields $\{g_{\mu \nu}(x^\alpha), u^\mu(x^\alpha),n(x^\alpha),\Pi(x^\alpha) \}$ are of class $C^1$ in an open neighborhood of such worldline. Furthermore, suppose that $\s$, $n$, and $P{+}\Pi$ are positive at the initial event $x^\alpha(0)$. Then, $\s$, $n$, and $P{+}\Pi$ are positive along the curve $x^\alpha(t)$ for all $t\geq 0$.
\end{theorem}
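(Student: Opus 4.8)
The plan is to restrict the system \eqref{EoMs} to the worldline $x^\alpha(t)$. Since $dx^\alpha/dt = u^\alpha$, every convective derivative $u^\mu\nabla_\mu$ collapses to the ordinary derivative $d/dt$, so the field equations become a coupled ODE system for $\s(t)$, $n(t)$, $\Pi(t)$ whose only ``external'' input is the expansion scalar $\theta := \nabla_\mu u^\mu$; because $u^\mu$ and $g_{\mu\nu}$ are $C^1$ near the curve, $\theta$ is continuous, hence bounded on every compact time interval. I would then establish positivity for $n$, $\s$, and $P{+}\Pi$ separately, in increasing order of difficulty.

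The baryon density is immediate: the continuity equation reads $\dot n = -n\theta$, which integrates to $n(t) = n(0)\exp(-\int_0^t\theta\,ds)$, a finite positive number for all $t$; consequently $P = Kn^\Gamma$ also stays strictly positive. The entropy is nearly as easy: along the worldline $\dot\s = \Pi Q'(\Pi/P)/(nT\tau) = P\,[xQ'(x)]/(nT\tau)$ with $x = \Pi/P$, and point (d) of section \ref{PropOfQ} guarantees $xQ'(x)\geq 0$. With $n$, $T$, $\tau$, $P$ all positive this forces $\dot\s\geq 0$, so $\s(t)\geq\s(0)>0$. These two steps are pure bookkeeping.

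The substance is the claim $P{+}\Pi>0$, which I would phrase in terms of $y := (P{+}\Pi)/P = 1+\Tilde{\Pi}$. Using the last row of \eqref{SymmetrzedEoMs} together with $\Delta\indices{^\mu_\alpha}\nabla_\mu u^\alpha = \theta$ and $u^\mu\nabla_\mu\Tilde{\Pi} = \dot y$, that row rearranges into the single scalar ODE
\begin{equation}
\dot y = \frac{1-y}{\tau} - a\,\theta\, y .
\end{equation}
The crucial structural fact --- the very thing the constitutive choices for $\zeta$ and $Q$ were designed to achieve --- is that the expansion-driven term $a\theta y$ carries an explicit factor of $y$ and therefore switches off exactly at the boundary $y=0$, where $\dot y = 1/\tau > 0$. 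Hence the vector field always points strictly into $\{y>0\}$ on that face, no matter how violent the expansion or contraction. To make this quantitative I would recast the equation as $\dot y + (\tau^{-1}+a\theta)y = \tau^{-1}$ and integrate with the strictly positive factor $\mu(t) = \exp(\int_0^t[\tau^{-1}+a\theta]\,ds)$, which gives $\mu(t)y(t) = y(0) + \int_0^t\mu(s)/\tau(s)\,ds \geq y(0) > 0$ and hence $y>0$.

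The step I expect to be the main obstacle is making this last argument airtight at the edge of the domain on which the constitutive relations live. The integrating-factor identity is only legitimate on intervals where $y>0$, so that $\tau$ is defined, positive, and continuous; I would therefore argue by contradiction, letting $t_*$ be the first time $y=0$ and deriving the lower bound $y(t)\geq y(0)/\mu(t)$ on $[0,t_*)$. The contradiction is immediate unless $\mu(t)\to\infty$ as $t\to t_*^-$, and since $\theta$ is bounded this can only happen if $\int_0^{t_*}\tau^{-1}\,ds$ diverges, i.e. if $\tau\to 0^+$ as $\Pi\to -P^+$. But that regime is self-defeating: for small $y$ the restoring term $(1-y)/\tau$ blows up while $a\theta y$ stays bounded, so $\dot y\to+\infty$ and $y$ is pushed sharply away from the boundary rather than toward it. Assembling these two cases closes the contradiction; the delicate boundary bookkeeping, not the algebra, is where the real care is required.
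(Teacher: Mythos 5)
Your treatment of $n$ and $\s$ coincides with the paper's, and for the crucial claim $P{+}\Pi>0$ you have isolated exactly the structural fact the paper's proof rests on: writing $y=1{+}\Tilde{\Pi}$, the last row of \eqref{SymmetrzedEoMs} does indeed reduce to $\dot y=(1-y)/\tau-a\theta y$, in which the expansion-driven term carries a factor of $y$ and therefore shuts off at the boundary $y=0$, where the relaxation term pushes inward. The difference lies in how this observation is turned into a proof. The paper argues qualitatively: at a hypothetical first zero $t_0$ of $P{+}\Pi$, equation \eqref{dPpPi} gives $d(P{+}\Pi)/dt=P/\tau>0$, so the function could only cross zero from below, contradicting positivity of the initial data (formalized by an infimum-of-the-zero-set argument). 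You argue quantitatively, via the integrating factor $\mu$, obtaining the explicit bound $y(t)\geq y(0)/\mu(t)$ on any interval where the solution remains inside the constitutive domain. Your route buys something genuine: it never evaluates $\tau$ (or any constitutive function) at $\Pi=-P$, where these functions are not defined; the paper's evaluation of \eqref{dPpPi} at the crossing point tacitly assumes $\tau$ extends positively to that boundary. The price is that you must dispose of the case $\mu(t)\to\infty$ as $t\to t_*^-$, and that is where your write-up has a real hole.

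The hole is this: divergence of $\int_0^{t_*}\tau^{-1}\,ds$ only forces $\liminf_{t\to t_*}\tau=0$, not $\tau\to 0^+$; the relaxation time could dip toward zero along a sequence of times while returning to order one in between, so you cannot conclude that $\dot y\to+\infty$, hence cannot conclude that $y$ is eventually increasing near $t_*$. Fortunately the fix is one line, and it eliminates the case distinction entirely: since $y(t_*)=0$ and $y$ is continuous, there is $s<t_*$ with $0<y\leq 1$ on $[s,t_*)$; there one has $(1-y)/\tau\geq 0$, hence $\dot y\geq -a|\theta|\,y\geq -Cy$ with $C=a\max_{[0,t_*]}|\theta|<\infty$, and Gronwall gives $y(t)\geq y(s)\,e^{-C(t-s)}$, so $y(t_*)\geq y(s)\,e^{-C(t_*-s)}>0$, a contradiction. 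A second, smaller slip is the ordering of your three steps: your entropy argument invokes $xQ'(x)\geq 0$, which is meaningful only for $x>-1$, i.e.\ it presupposes $P{+}\Pi>0$ --- precisely what you prove afterwards (this is why the paper establishes $P{+}\Pi>0$ \emph{before} $\s>0$). Since your $y$-argument in turn uses positivity of $\tau(\s,n,\Pi)$, the clean repair is not merely to swap the steps but to run all three bounds simultaneously up to the first exit time from the domain $\{\s>0,\ n>0,\ y>0\}$ and contradict its finiteness; with that bookkeeping, and the Gronwall fix above, your proof is complete.
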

\begin{proof}
First, let us prove that $n$ remains positive. Since the functions $x^\alpha(t)$ and $u^\mu(x^\alpha)$ and $g_{\mu \nu}(x^\alpha)$ are of class $C^1$ in their respective arguments, the composite function $\nabla_\mu u^\mu(t)\equiv \nabla_\mu u^\mu(x^\alpha(t))$ is of class $C^0(\mathbb{R}^+)$. It follows that the integral $\int_0^t \nabla_\mu u^\mu(t')dt'$ is finite for finite $t$. Hence, we can solve the third equation of \eqref{EoMs}, and we conclude that 
\begin{equation}\label{npositive}
n(t)=n(0)e^{-\int_0^t \nabla_\mu u^\mu(t')dt'}=n(0)\times \text{``positive number''} >0  \spc (\forall \, \, t \geq 0)\, .
\end{equation}

Let us now show that $P{+}\Pi$ remains positive. We will do this by proving that $P{+}\Pi$ can cross 0 only from below, not from above. In fact, suppose that there is some $t_0>0$ such that $P(t_0)+\Pi(t_0)=0$. Then, we can evaluate the ``combined'' equation of motion [see system \eqref{EoMs}]
\begin{equation}\label{dPpPi}
\dfrac{d}{dt} (P{+}\Pi)=-\dfrac{\Pi}{\tau}-(a{+}\Gamma)(P{+}\Pi)\nabla_\mu u^\mu 
\end{equation}
at time $t_0$, and we obtain $d (P{+}\Pi)/dt=P/\tau>0$, where we recall that $\tau$ is positive by construction. But $P+\Pi$ is of class $C^1$, so its derivative is continuous. Thus, there exists an open neighborhood $\mathcal{U}$ of $t_0$ where $d(P{+}\Pi)/dt>0$. This means that $P{+}\Pi$ is strictly increasing in the interval $\mathcal{U}$. It follows that $P{+}\Pi$ can cross 0 only if it is coming from below, i.e. if it goes from \textit{negative} to \textit{positive}, not the other way around\footnote{\textit{For mathematicians:} Here is a more formal version of the proof. Define the set $Z=\{t\geq 0 \, |\, P(t){+}\Pi(t)=0 \}$, and suppose that it is not empty. Clearly, $Z$ is bounded below, being a subset of $\mathbb{R}^+$. Therefore, it possesses an infimum $t_I \geq 0$. But $Z$ is closed, being the set of roots of a continuous function. Hence, $t_I$ belongs to $Z$, meaning that $P(t_I){+}\Pi(t_I)=0$. This also implies that $t_I>0$, because $P(0){+}\Pi(0)$ is positive. Thus, we can invoke the reasoning of the main text (taking $t_0 =t_I$) and conclude that there exists $\bar{t}\in (0,t_I)$ such that $P(\Bar{t}){+}\Pi(\Bar{t})<0$. But then, the intermediate value theorem applied to the interval $I=[0,\Bar{t}]$ tells us that there is a point in $I$ where $P{+}\Pi$ vanishes. This contradicts the definition of $t_I$ as the infimum of $Z$. It follows that $Z$ must be empty, as desired.}.

Proving that $\s$ remains positive is, at this point, straightforward. We have the following evolution equation:
\begin{equation}
\dfrac{d e_\text{th}(\s)}{dt} = \dfrac{P}{nm\tau} \, \Tilde{\Pi} Q'(\Tilde{\Pi}) \, .
\end{equation}
The right-hand side is non-negative. In fact, we know that $n$ is positive and $\Tilde{\Pi}=(P{+}\Pi)/P{-}1 $ is greater than $-1$. The function $xQ'(x)$ is designed to be non-negative whenever its argument is above $-1$. Thus, the function $e_\text{th}(\s(t))$ cannot decrease. But since $e_\text{th}(\s)$ grows with $\s$, we conclude that $\s(t)$ is non-decreasing as well.
\end{proof}

We note that, in the real world, there are substances where  $P{+}\Pi$ can become negative \cite{RajagopalCavitation2009yw,HabichCaviation2014tpa,ByresCaviaiton2020}. Modeling such fluids would require us to modify \eqref{EoS} and \eqref{thefunctionQ} in a way that the energy barrier in figure \ref{fig:xqx} be shifted more to the left.

\subsection{Dominant energy condition}

Here is a quick theorem that will be useful to prove causality.
\begin{theorem}\label{Theo2}
If $\s$, $n$, and $P{+}\Pi$ are positive, the dominant energy condition is respected, namely $\rho >|P{+}\Pi|$.
\end{theorem}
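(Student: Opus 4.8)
The plan is to reduce the dominant energy condition to a single scalar inequality about the function $Q$, and then to exploit the algebraic identity recorded in point (d) of section \ref{PropOfQ}. First I would dispose of the trivial half. Since $\rho = mn[1+e_\text{th}(\s)] + [1/(\Gamma{-}1) + Q(\Pi/P)]P$ is a sum of manifestly positive terms (using $m,n,\s,P>0$, $e_\text{th}>0$, and $Q\geq 0$), we have $\rho > 0 > -(P{+}\Pi)$. Because $P{+}\Pi>0$ by hypothesis, the statement $\rho>|P{+}\Pi|$ is therefore equivalent to the single inequality $\rho > P{+}\Pi$, which is all that remains to prove.

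Next I would introduce $x := \Pi/P = \Tilde{\Pi} > -1$ and write
\begin{equation}
\rho - (P{+}\Pi) = mn[1+e_\text{th}(\s)] + P\, g(x), \qquad g(x) := Q(x) - x + \dfrac{1}{\Gamma{-}1} - 1 .
\end{equation}
The thermal contribution $mn[1+e_\text{th}(\s)]$ is strictly positive, so it suffices to show $g(x) \geq 0$ for all $x>-1$.

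The key step is to feed the identity $a(1{+}x)Q'(x) + (\Gamma{-}1)Q(x) - x = 0$ of point (d) into $g$. Eliminating $Q$ in favor of $Q'$ collapses $g$ into the compact product form
\begin{equation}
(\Gamma{-}1)\, g(x) = (1{+}x)\big[(2{-}\Gamma) - a\, Q'(x)\big] .
\end{equation}
Since $\Gamma{-}1>0$ and $1{+}x>0$, the sign of $g$ equals that of $(2{-}\Gamma) - aQ'(x)$. I would then bound $Q'$ using the explicit formula \eqref{thefunctionQ}: because $(1{+}x)^{-(a+\Gamma-1)/a}>0$, we have $Q'(x) < 1/(a{+}\Gamma{-}1)$, hence $aQ'(x) < a/(a{+}\Gamma{-}1)$. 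The proof closes by checking $a/(a{+}\Gamma{-}1) \leq 2{-}\Gamma$, which — after multiplying through by the positive quantities $a{+}\Gamma{-}1$ and $\Gamma{-}1$ — is equivalent to $a{+}\Gamma{-}1 \leq 1$, precisely the admissibility constraint $a{+}\Gamma{-}1 \in (0,1)$. Thus $(2{-}\Gamma) - aQ'(x) > 0$, giving $g(x)>0$ and hence $\rho > P{+}\Pi$.

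I expect the main obstacle to be the second step: recognizing that the identity in point (d) is exactly what linearizes $g$ into the product $(1{+}x)[(2{-}\Gamma)-aQ']$. Everything downstream is a routine sign check, and it is reassuring that the decisive inequality $a{+}\Gamma{-}1 < 1$ is the very constraint carving out the red triangle in figure \ref{fig:allowedaG}; the dominant energy condition and the admissible parameter region are thereby revealed to be two faces of the same condition.
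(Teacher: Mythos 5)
Your proof is correct, and it takes a genuinely different route from the paper's. The paper proves $F=\Tilde{\rho}-1-\Tilde{\Pi}>0$ by substituting the explicit formula \eqref{thefunctionQ} for $Q$ into $F$ and regrouping the result into a sum of three manifestly positive terms,
\begin{equation}
F= \dfrac{mn}{P}\big[1{+}e_\text{th}(\s) \big]+\dfrac{a(1{+}\Tilde{\Pi})^{-\frac{\Gamma{-}1}{a}}}{(a{+}\Gamma{-}1)(\Gamma{-}1)} +\dfrac{2{-}a{-}\Gamma}{a{+}\Gamma{-}1}(1{+}\Tilde{\Pi}) \, ,
\end{equation}
whereas you never expand $Q$ itself: you use the differential identity $a(1{+}x)Q'+(\Gamma{-}1)Q-x=0$ of point (d) to eliminate $Q$ from your function $g(x)=Q(x)-x+(2{-}\Gamma)/(\Gamma{-}1)$, obtaining the factorization $(\Gamma{-}1)g(x)=(1{+}x)\big[(2{-}\Gamma)-aQ'(x)\big]$, and then bound $Q'$ by its supremum $1/(a{+}\Gamma{-}1)$ read off from \eqref{thefunctionQ}. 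Your algebra checks out: the identity does yield that factorization, and $a/(a{+}\Gamma{-}1)\le 2{-}\Gamma$ is indeed equivalent to $a{+}\Gamma{-}1\le 1$; moreover, since $Q'(x)<1/(a{+}\Gamma{-}1)$ strictly for all $x>-1$, you obtain the strict inequality $g(x)>0$ that the theorem requires, even for $-1<x<0$ where $Q'<0$ (your bound covers that case uniformly). The two arguments are algebraically equivalent---substituting the formula for $Q'$ into your product form and using $(1{+}x)\,(1{+}x)^{-\frac{a+\Gamma-1}{a}}=(1{+}x)^{-\frac{\Gamma-1}{a}}$ recovers the paper's decomposition exactly---but the organizations buy different things. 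The paper's version displays positivity term by term, which makes the failure mode outside the red triangle of figure \ref{fig:allowedaG} immediately visible (the last term flips sign when $a{+}\Gamma>2$ and dominates at large $\Tilde{\Pi}$). Your version isolates exactly where the parameter constraint enters: the dominant energy condition collapses to the single scalar inequality $\sup_{x>-1} aQ'(x)\le 2{-}\Gamma$, i.e.\ $a{+}\Gamma{-}1\le 1$, and it ties the theorem to the same identity (d) that underpins the first law and the entropy production, which is a nice structural observation the paper's computation leaves implicit.
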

\begin{proof}
Since $P{+}\Pi$ is positive, we can drop the absolute value in the dominant energy condition, and we just need to show that $\rho>P{+}\Pi$. Equivalently, we need to show that the dimensionless function $F=(\rho{-}P{-}\Pi)/P=\Tilde{\rho}-1-\Tilde{\Pi}$ is positive. Let us write this function $F$ explicitly:
\begin{equation}
F= \dfrac{mn}{P}\big[1{+}e_\text{th}(\s) \big]+  \dfrac{2{-}\Gamma}{\Gamma{-}1} + Q(\Tilde{\Pi})-\Tilde{\Pi} \, .
\end{equation}
Using \eqref{thefunctionQ}, and rearranging the addends, we obtain
\begin{equation}
F= \dfrac{mn}{P}\big[1{+}e_\text{th}(\s) \big]+\dfrac{a(1{+}\Tilde{\Pi})^{-\frac{\Gamma{-}1}{a}}}{(a{+}\Gamma{-}1)(\Gamma{-}1)} +\dfrac{2{-}a{-}\Gamma}{a{+}\Gamma{-}1}(1{+}\Tilde{\Pi}) \, .
\end{equation}
The term proportional to $m$ is clearly positive. Moreover, each individual factor in the other two terms is positive, provided that $P+\Pi>0$, and provided that $\{\Gamma,a \}$ falls in the triangle of figure \ref{fig:allowedaG}.
\end{proof}
The dominant energy condition is a very useful property for a fluid theory to have, since it guarantees that the future development of an empty region of spacetime remains empty. In other words, the support of $T^{\mu \nu}$ cannot expand faster than light. This is known as the ``vacuum conservation theorem'', which is proven in \cite[\S 4.3]{HawkingEllis1973uf}.

\subsection{Non-linear causality}
\vspace{-0.2cm}

We are finally ready to prove that the fully non-linear system \eqref{SymmetrzedEoMs} is causal.


\begin{theorem}
If $\s$, $n$, and $P{+}\Pi$ are positive, all the characteristic speeds are real and subluminal.
\end{theorem}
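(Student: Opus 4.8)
My plan is to read the characteristic structure directly off the symmetric form \eqref{SymmetrzedEoMs}. Writing that system as $\mathcal{A}^{\mu}\nabla_{\mu}\Tilde{\Psi}=\mathcal{B}$, the characteristic covectors $\xi_{\mu}$ are the zeros of $\det(\mathcal{A}^{\mu}\xi_{\mu})$. I would split $\xi_{\mu}$ relative to the flow, setting $\Omega=u^{\mu}\xi_{\mu}$ and $k^{\mu}=\Delta^{\mu\nu}\xi_{\nu}$, so that the squared phase speed measured in the local rest frame is $v^{2}=\Omega^{2}/(\Delta^{\alpha\beta}\xi_{\alpha}\xi_{\beta})$, with denominator $k_{\mu}k^{\mu}\geq 0$. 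The statement then amounts to showing that every root has $v^{2}\in[0,1]$.

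Next I would feed $\xi_{\mu}$ into the symbol and peel off the modes. The second row gives $\Omega\,\delta\s=0$, so entropy rides only on the nonpropagating branch $\Omega=0$; similarly, any part of $\delta u^{\nu}$ orthogonal to $k^{\nu}$ forces $\Omega=0$ because the prefactor $1{+}\Tilde{\Pi}{+}\Tilde{\rho}=(\rho{+}P{+}\Pi)/P$ is strictly positive. All such modes have $v=0$ and are trivially causal. On the remaining (sound) branch $\delta u^{\nu}\propto k^{\nu}$, and eliminating $\delta\Tilde{P}$ and $\delta\Tilde{\Pi}$ via the third and fourth rows and substituting into the first should collapse everything to the single dispersion relation
\begin{equation}
v^{2}=\frac{(1{+}\Tilde{\Pi})(a{+}\Gamma)}{1{+}\Tilde{\Pi}{+}\Tilde{\rho}}\,.
\end{equation}
Since every factor is positive under the hypotheses, $v^{2}>0$, so the roots are real; this is consistent with the system being symmetric hyperbolic, as one also sees from the fact that $\mathcal{A}^{\mu}u_{\mu}$ reduces to a diagonal block with strictly negative entries.

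The real work is the bound $v^{2}\leq 1$, which rearranges to $\Tilde{\rho}\geq(a{+}\Gamma{-}1)(1{+}\Tilde{\Pi})$. Here I would recycle the explicit expression for $F=\Tilde{\rho}-1-\Tilde{\Pi}$ already computed in the proof of Theorem \ref{Theo2}. Writing $s=a{+}\Gamma{-}1$ and collecting the terms linear in $1{+}\Tilde{\Pi}$ (whose coefficient simplifies via $2{-}a{-}\Gamma=1{-}s$ to $1+\tfrac{1-s}{s}-s=\tfrac{1-s^{2}}{s}$), I expect the difference to organize as
\begin{equation}
\Tilde{\rho}-(a{+}\Gamma{-}1)(1{+}\Tilde{\Pi})=\frac{mn}{P}\big[1{+}e_\text{th}(\s)\big]+\frac{1{-}s^{2}}{s}\,(1{+}\Tilde{\Pi})+\frac{a\,(1{+}\Tilde{\Pi})^{-\frac{\Gamma-1}{a}}}{s(\Gamma{-}1)}\,.
\end{equation}
Every term on the right is strictly positive precisely because $s=a{+}\Gamma{-}1\in(0,1)$ (the red triangle of figure \ref{fig:allowedaG}), together with $\Gamma>1$, $a>0$, and $P{+}\Pi>0$; hence $v^{2}<1$ strictly. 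The main obstacle I anticipate is pure bookkeeping: verifying that the $Q$-dependent contribution in $F$, rewritten through \eqref{thefunctionQ}, combines with the linear term to give exactly the factor $(1{-}s^{2})/s$ without dropping a sign, after which invoking the parameter window $a{+}\Gamma{-}1\in(0,1)$ closes the argument immediately.
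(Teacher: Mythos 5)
Your proposal is correct, and it reaches exactly the paper's dispersion relation \eqref{w2speed}: the mode-elimination you sketch (entropy and transverse-velocity modes forced onto the $u^\mu\xi_\mu=0$ branch, then elimination of $\delta\Tilde{P}$ and $\delta\Tilde{\Pi}$ on the longitudinal branch) does collapse to $v^2=(a{+}\Gamma)(1{+}\Tilde{\Pi})/(1{+}\Tilde{\Pi}{+}\Tilde{\rho})$, which the paper instead extracts by computing the full characteristic determinant $\det(\mathcal{M}^\mu\xi_\mu)$ of \eqref{SymmetrzedEoMs}; this difference is essentially cosmetic. Where you genuinely depart from the paper is the subluminality step. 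The paper invokes the dominant energy condition (Theorem \ref{Theo2}) as a standalone lemma: $\Tilde{\rho}>1{+}\Tilde{\Pi}$ gives $w^2<(a{+}\Gamma)/2$, which is below $1$ because $a{+}\Gamma{-}1<1$. You instead rearrange $v^2<1$ into the exactly equivalent inequality $\Tilde{\rho}>(a{+}\Gamma{-}1)(1{+}\Tilde{\Pi})$ and verify it term by term by reusing the explicit decomposition of $F=\Tilde{\rho}-1-\Tilde{\Pi}$ from the proof of Theorem \ref{Theo2}; your bookkeeping is right, since with $s=a{+}\Gamma{-}1$ the linear terms combine as $1+\tfrac{1-s}{s}-s=\tfrac{1-s^2}{s}>0$ for $s\in(0,1)$, and the remaining two terms are manifestly positive. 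Both endings ultimately rest on the same positivity decomposition of $F$, but they buy slightly different things: the paper's is more modular and yields the sharper quantitative bound $w^2<(a{+}\Gamma)/2$ for free, while yours makes transparent that subluminality is precisely the statement $\Tilde{\rho}>(a{+}\Gamma{-}1)(1{+}\Tilde{\Pi})$, with no intermediate estimate. Note, though, that since $a{+}\Gamma{-}1<1$, your target inequality follows in one line from the DEC itself ($\Tilde{\rho}>1{+}\Tilde{\Pi}>(a{+}\Gamma{-}1)(1{+}\Tilde{\Pi})$), so the explicit recomputation, while correct, is avoidable. One last nitpick: your parenthetical claim that $\mathcal{A}^\mu u_\mu$ being negative definite ``also shows'' symmetric hyperbolicity is loose (that is the content of the paper's separate Theorem 5 and Appendix \ref{AAA}, and it concerns $\mathcal{A}^0$ after reduction), but nothing in your argument relies on it.
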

\begin{proof}
The system \eqref{SymmetrzedEoMs} has the standard matrix form $\mathcal{M}^\mu \nabla_\mu \Tilde{\Psi}=\mathcal{N}$, with $\Psi=[u^\alpha,\s,\Tilde{P},\Tilde{\Pi} ]^T$, and
\begin{equation}
\mathcal{M}^\mu =    \begin{bmatrix}
(1{+}\Tilde{\Pi}{+}\Tilde{\rho})g\indices{^\nu _\alpha}u^\mu & 0 & (1{+}\Tilde{\Pi})\Delta^{\nu \mu} & \Delta^{\nu \mu}\\
0 & u^\mu & 0 & 0\\
(1{+}\Tilde{\Pi})\Delta\indices{^\mu _\alpha} & 0 & \dfrac{1{+}\Tilde{\Pi}}{\Gamma} u^\mu & 0\\
\Delta\indices{^\mu _\alpha} & 0 & 0 & \dfrac{u^\mu}{a(1{+}\Tilde{\Pi})}\\
\end{bmatrix} \, .
\end{equation}
We follow the same procedure as in \cite{Causality_bulk}, and we compute the characteristic determinant $\mathcal{P}(\xi)=\text{det}(\mathcal{M}^\mu \xi_\mu)$, where $\xi_\mu$ is an arbitrary covector. The result is
\begin{equation}
\mathcal{P}(\xi)=\dfrac{(1{+}\Tilde{\Pi}{+}\Tilde{\rho})^4}{a\Gamma} (u^\mu \xi_\mu)^5\bigg[(u^\mu \xi_\mu)^2- \dfrac{(a{+}\Gamma)(1{+}\Tilde{\Pi})}{1{+}\Tilde{\Pi}{+}\Tilde{\rho}} \Delta^{\alpha \nu}\xi_\alpha \xi_\nu\bigg] \, .
\end{equation}
Following the logic of \cite{Causality_bulk}, we conclude that, in the local rest frame of the fluid, there are 5 characteristics with vanishing speed, plus an isotropic characteristic cone with speed
\begin{equation}\label{w2speed}
w^2 = (a{+}\Gamma)   \dfrac{1{+}\Tilde{\Pi}}{1{+}\Tilde{\Pi}{+}\Tilde{\rho}} \, . 
\end{equation}
Clearly, $w^2$ it is positive, as long as $\s$, $n$, and $P{+}\Pi$ are positive (so $w$ is real). Furthermore, we can invoke the dominant energy condition, according to which $\rho>1+\Tilde{\Pi}$ (see Theorem \ref{Theo2}), and we find that
$w^2<(a{+}\Gamma)/2$, which is smaller than 1, since $\{\Gamma,a\}$ fall inside the red triangle in figure \ref{fig:allowedaG}.    
\end{proof}
More details about the scaling of $w$ with $\Tilde{\Pi}$ are given in appendix \ref{laspeedishere}.

We remark that, while the theorem was proven by treating the metric as a fixed background, one can repeat the calculation promoting $g_{\mu \nu}$ to an independent degree of freedom, which evolves according to Einstein's equations. Then, an argument by \cite{Causality_bulk} tells us that causality will hold for the whole ``Einstein+hydrodynamics'' problem.

\vspace{-0.2cm}
\subsection{Symmetric hyperbolicity}
\vspace{-0.2cm}

Our final theorem concerns symmetric hyperbolicity. Unfortunately, we cannot import Theorem 2 of \cite{Causality_bulk} directly, because one of the assumptions there is that $\zeta$ is non-negative, which may not be true for us at large $\Pi/P$. Luckily, it is straightforward to adapt the reasoning of \cite{Causality_bulk} to our case.


\begin{theorem}
If $\s$, $n$, and $P{+}\Pi$ are positive, the system \eqref{SymmetrzedEoMs} can be expressed in a symmetric hyperbolic form.
\end{theorem}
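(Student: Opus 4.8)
The plan is to exhibit an explicit symmetrizer for the principal part of \eqref{SymmetrzedEoMs} together with a timelike covector against which the symmetrized matrices are positive definite, following the logic of \cite{Causality_bulk} but replacing their hypothesis $\zeta\geq 0$ by the positivity of $P{+}\Pi$ guaranteed by Theorem \ref{theo1}. Recall that the quasilinear system $\mathcal{M}^\mu \nabla_\mu \Tilde{\Psi}=\mathcal{N}$ is symmetric hyperbolic if, after left-multiplication by a suitable symmetrizer $\mathcal{S}$, each $\mathcal{A}^\mu=\mathcal{S}\mathcal{M}^\mu$ is a symmetric matrix in field-index space and there is a covector $\zeta_\mu$ for which $\mathcal{A}^\mu\zeta_\mu$ is positive definite on the admissible perturbations, i.e. those respecting the normalization constraint $u_\mu\delta u^\mu=0$. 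Since this is a statement about the principal symbol only, the source $\mathcal{N}$ is irrelevant and can be dropped.

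First I would left-multiply $\mathcal{M}^\mu$ by the block-diagonal symmetrizer
\begin{equation}
\mathcal{S}=\text{diag}\big(P\,g_{\nu\lambda},\; mn\,e_\text{th}'',\; P,\; P\big),
\end{equation}
which amounts to lowering the free spacetime index of the momentum equation with the metric and rescaling the three scalar rows by the indicated positive factors. A block-by-block inspection then shows that $\mathcal{A}^\mu=\mathcal{S}\mathcal{M}^\mu$ is symmetric: the velocity--velocity block becomes $P(1{+}\Tilde{\Pi}{+}\Tilde{\rho})\,g_{\nu\alpha}u^\mu$, symmetric in $(\nu,\alpha)$; and the cross-blocks coupling the velocity to $\Tilde{P}$ and to $\Tilde{\Pi}$ become $P(1{+}\Tilde{\Pi})\Delta\indices{_\nu^\mu}$ and $P\,\Delta\indices{_\nu^\mu}$, which coincide with the transposed blocks $P(1{+}\Tilde{\Pi})\Delta\indices{^\mu_\alpha}$ and $P\,\Delta\indices{^\mu_\alpha}$ because $\Delta^{\mu\nu}$ is symmetric and because the common factor $P$ makes the two sides match. (It is precisely the freedom in choosing the scalar factors that lets $-u_\mu\mathcal{A}^\mu$ reproduce the positive information-current quadratic form \eqref{EEE} in the equilibrium limit.)

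Next I would contract with the future-directed timelike covector $\zeta_\mu=-u_\mu$. Using $u^\mu u_\mu=-1$ and $\Delta^{\nu\mu}u_\mu=0$, all off-diagonal blocks--being proportional to $\Delta\indices{_\nu^\mu}u_\mu$--vanish, and $\mathcal{A}^\mu\zeta_\mu$ collapses to the block-diagonal form with entries $P(1{+}\Tilde{\Pi}{+}\Tilde{\rho})g_{\nu\alpha}$, $mn\,e_\text{th}''$, $P(1{+}\Tilde{\Pi})/\Gamma$, and $P/[a(1{+}\Tilde{\Pi})]$. Each factor is strictly positive: $P(1{+}\Tilde{\Pi}{+}\Tilde{\rho})=\rho{+}P{+}\Pi>0$ and $P(1{+}\Tilde{\Pi})=P{+}\Pi>0$ by hypothesis (and by Theorem \ref{theo1}), $mn\,e_\text{th}''>0$ since $m,n,e_\text{th}''>0$, while $\Gamma,a>0$. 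The only delicate entry is the velocity block $g_{\nu\alpha}$, which is Lorentzian and hence indefinite on the full tangent space; but admissible velocity perturbations obey $u_\mu\delta u^\mu=0$, and the restriction of $g_{\nu\alpha}$ to the three-dimensional subspace orthogonal to $u^\mu$ is the spatial metric, which is positive definite. Hence $\mathcal{A}^\mu\zeta_\mu$ is positive definite on the admissible perturbations, and \eqref{SymmetrzedEoMs} is symmetric hyperbolic.

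I expect the main obstacle to be conceptual rather than computational. The crucial observation is that none of the positivity requirements above involves the sign of $\zeta=[a{+}(a{+}\Gamma)\Tilde{\Pi}]P\tau$, which is exactly the quantity that \cite{Causality_bulk} had to control and which can turn negative in our models at large $\Tilde{\Pi}$. The delicate point is therefore to recognize that the correct symmetrizer here is generated by the manifestly positive thermodynamic Hessian (the information current) rather than by the transport coefficient, so that the obstruction of \cite{Causality_bulk} simply never arises as long as $P{+}\Pi>0$; the remaining care lies entirely in the constraint handling of the velocity block.
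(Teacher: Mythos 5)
Your block-by-block computations are correct and the overall spirit (symmetrize, then contract against a timelike covector) matches the paper's, but there is a genuine gap in the last step: positive definiteness ``on admissible perturbations'' is not symmetric hyperbolicity. In the form you leave it, the system still has all four components of $u^\alpha$ among its unknowns, and your contracted matrix $\mathcal{A}^\mu\zeta_\mu$, whose velocity block is $P(1{+}\Tilde{\Pi}{+}\Tilde{\rho})g_{\nu\alpha}$, is \emph{indefinite}: it is negative in the direction $\delta u^\alpha \propto u^\alpha$. The Friedrichs well-posedness machinery (energy estimates, the iteration scheme for quasilinear systems) requires positivity on the whole space of unknowns; it does not know about the constraint $u_\mu\delta u^\mu=0$, which is preserved by exact solutions but is not built into the PDE system itself. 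To close the argument you must eliminate the constraint: in a local Lorentz frame set $u^0=\sqrt{1+\delta_{jk}u^ju^k}$, so that $\partial_\mu u^\alpha = \mathcal{C}^\alpha_j\,\partial_\mu u^j$ with $\mathcal{C}^\alpha_j = g^\alpha_j + (u_j/u^0)\delta^\alpha_0$, and correspondingly contract the four momentum equations with $\mathcal{C}^\alpha_i$. Since $u_\alpha\mathcal{C}^\alpha_i=0$, the image of $\mathcal{C}$ is exactly the subspace orthogonal to $u^\mu$ on which you verified positivity, and the congruence transformation by $\mathcal{C}$ preserves symmetry; the velocity block becomes $A_{ji}=\delta_{ji}-u_ju_i/(u^0)^2$, which is positive definite. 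This reduction to the six unconstrained fields $\{u^j,\s,\Tilde{P},\Tilde{\Pi}\}$ is precisely what the paper does, and it is the step your proposal gestures at (``the remaining care lies entirely in the constraint handling'') but does not actually perform.

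Beyond that, your route differs from the paper's in a way worth noting. You contract with $\zeta_\mu=-u_\mu$, which kills all cross blocks (since $\Delta\indices{_\nu^\mu}u_\mu=0$) and makes positivity of the diagonal blocks immediate; in particular you never need the causality bound. The paper instead proves that the reduced $\mathcal{A}^0$ is positive definite in an \emph{arbitrary} Lorentz frame (Appendix \ref{AAA}), via Sylvester's criterion; the last two minors carry factors like $1-w^2v^2$, so that stronger statement consumes the causality theorem $w^2<1$. The payoff is that initial data may be prescribed on any spacelike coordinate surface, however fast the fluid moves relative to it, whereas your argument guarantees positivity only for conormals in some unquantified neighborhood of $-u_\mu$, i.e., on slices nearly orthogonal to the flow. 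For the theorem as literally stated (existence of \emph{some} symmetric hyperbolic form), your weaker conclusion suffices once the constraint reduction above is carried out; for the paper's subsequent use (well-posedness on generic slicings and coupling to Einstein's equations), the frame-independent positivity is the version one actually wants. Your closing observation---that the symmetrizer is generated by the thermodynamic Hessian rather than by the transport coefficient $\zeta$, so the obstruction of \cite{Causality_bulk} never arises when $P{+}\Pi>0$---is correct and is indeed the conceptual heart of why the theorem holds.
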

\begin{proof}
For clarity, let us work in a local Lorentz frame (so that $g_{\mu \nu}=\eta_{\mu \nu}$ and $\nabla_\mu=\partial_\mu$). We employ the normalization condition $u^\mu u_\mu=-1$ to express $u^0$ as a function of $u^j$. Furthermore, we eliminate one equation by contracting the first line of \eqref{SymmetrzedEoMs} with $\mathcal{C}^\alpha_i=g^\alpha_i+\frac{u_i}{u^0}\delta^\alpha_0$. The result is a reduced system $\mathcal{A}^\mu \partial_\mu \Upsilon=\mathcal{B}$, where $\Upsilon=[u^j,\s,\Tilde{P},\Tilde{\Pi}]^T$, and
\begin{equation}
\mathcal{A}^0=
\begin{bmatrix}
(1 {+}\Tilde{\Pi}{+}\Tilde{\rho})A_{ji}u^0 & 0 & (1{+}\Tilde{\Pi})\dfrac{u_i}{u^0} & \dfrac{u_i}{u^0} \\
0 & u^0 & 0 & 0 \\
(1{+}\Tilde{\Pi})\dfrac{u_j}{u^0} & 0 & (1{+}\Tilde{\Pi})\dfrac{u^0}{\Gamma} & 0 \\
\dfrac{u_j}{u^0} & 0 & 0 & \dfrac{u^0}{a(1{+}\Tilde{\Pi})} \\
\end{bmatrix} \, , \spc
\mathcal{A}^k=
\begin{bmatrix}
(1 {+}\Tilde{\Pi}{+}\Tilde{\rho})A_{ji}u^k & 0 & (1{+}\Tilde{\Pi})\delta^k_i & \delta^k_i \\
0 & u^k & 0 & 0 \\
(1{+}\Tilde{\Pi})\delta^k_j & 0 & (1{+}\Tilde{\Pi})\dfrac{u^k}{\Gamma}  & 0 \\
\delta^k_j & 0 & 0 & \dfrac{u^k}{a(1{+}\Tilde{\Pi})} \\
\end{bmatrix} \, , \spc
\end{equation}
with $A_{ji}=\delta_{ji}-\frac{u_j u_i}{(u^0)^2}$. This system is manifestly symmetric. In Appendix \ref{AAA}, we show that $\mathcal{A}^0$ is positive definite. Thus, the system is symmetric hyperbolic.
\end{proof}

It is well known that symmetric hyperbolicity implies local well-posedness. Hence, Theorem 5 tells us that local solutions to the initial value problem exist, are unique, and depend continuously on the initial data. 

Again, we remark that, while the above proof treats the metric as externally fixed, one can easily adapt the reasoning of \cite{Causality_bulk} to our situation, and conclude that the initial value problem will remain well-posed when coupled to the metric through Einstein's equations.

\section{Some violent flows}

It is time to put our Theorem \ref{theo1} to the test. In this section, we will consider three ``extreme'' situations, where the fluid is forced to expand very rapidly. Under these conditions, ordinary Israel-Stewart theory would drive the total pressure $P+\Pi$ below $0$, and may ultimately violate causality (when $P+\Pi \rightarrow -\rho$). Our models will not do that. Indeed, we will verify that the normalized bulk stress $\Tilde{\Pi}=\Pi/P$, which evolves according to the equation 
\begin{equation}
\tau u^\mu \nabla_\mu \Tilde{\Pi} +\Tilde{\Pi}=- a \tau(1{+}\Tilde{\Pi}) \nabla_\mu u^\mu \, ,
\end{equation}
always ``refuses'' to evolve below $-1$. 
From now on, we will assume that $\tau$ is constant, for clarity.

\vspace{-0.3cm}
\subsection{Bjorken expansion}
\vspace{-0.2cm}

For our first test, we consider a flow worldline $x^\alpha(t)$ that undergoes Bjorken expansion \cite{BjorkenFlow1983}, namely $\nabla_\mu u^\mu(t) =1/t$. Defining the normalized time $\xi=t/\tau$, the evolution equation for $\Tilde{\Pi}$ along this worldline reads
\begin{equation}\label{Bjorkuzzonzone}
\dfrac{d \Tilde{\Pi}}{d\xi}+\Tilde{\Pi}=-(1{+}\Tilde{\Pi}) \dfrac{a}{\xi} \, .
\end{equation}
At $\xi=0$, the hydrodynamic description breaks down, since $\nabla_\mu u^\mu =\infty$. Hence, we need to impose the initial data at some later time $\xi_0>0$. We assume that the fluid starts in local equilibrium, $\Tilde{\Pi}(\xi_0)=0$, and is then driven out of equilibrium by the rapid expansion. Solving \eqref{Bjorkuzzonzone} analytically, we obtain (see figure \ref{fig:BjorkenGood})
\begin{equation}\label{grione}
\Tilde{\Pi}(\xi)=ae^{-\xi} (-\xi)^{-a} \big[\Gamma(a,-\xi)-\Gamma(a,-\xi_0) \big] \, .
\end{equation}
As can be seen, no matter how early we initialize the flow (i.e. how large the gradients), $\Tilde{\Pi}$ never goes below $-1$.

\begin{figure}[h!]
    \centering
    \includegraphics[width=0.45\linewidth]{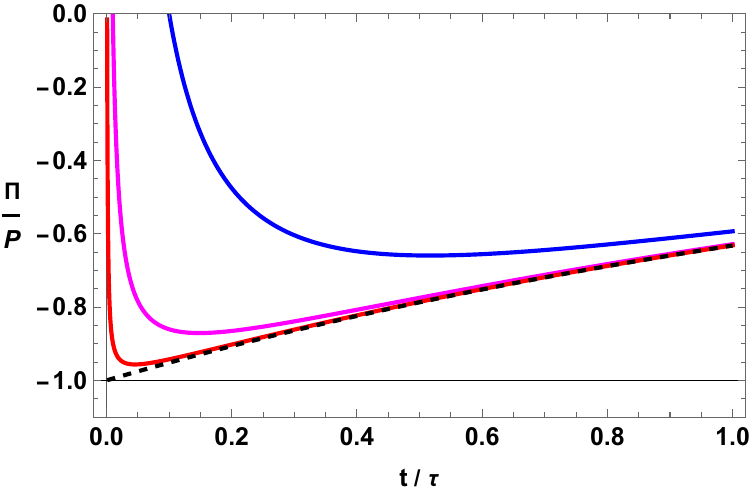}
    \caption{Evolution of the bulk pressure ratio $\Tilde{\Pi}=\Pi/P$ along a Bjorken flow, according to equation \eqref{grione}. To maximize the non-equilibrium effects, while still keeping the equations causal, we have chosen $a=0.999$. Each curve corresponds to a different choice of $\xi_0$, respectively $0.1$ (blue), $0.01$ (magenta), and $0.001$ (red). The dashed line corresponds to the limit where $\xi_0 \rightarrow 0^+$. As can be seen, the value of $\Pi/P$ never goes below $-1$, in accordance with Theorem \ref{theo1}.}
    \label{fig:BjorkenGood}
\end{figure}

\vspace{-0.3cm}
\subsection{Fast Hubble expansion}
\vspace{-0.2cm}

For the second test, we consider a flow worldline $x^\alpha(t)$ at rest in a FLRW spacetime, with constant Hubble parameter $H>0$, so that $\nabla_\mu u^\mu(t)\, {=}\, 3H$ \cite{Maartens1995}. Introducing the dimensionless time $\xi=t/\tau$ and the dimensionless expansion rate $\theta=3aH\tau$, the evolution equation for $\Tilde{\Pi}$ along this worldline reads 
\begin{equation}
\dfrac{d\Tilde{\Pi}}{d\xi}+ \Tilde{\Pi} = -(1{+}\Tilde{\Pi}) \theta \, .
\end{equation}
Setting $\Tilde{\Pi}(0)=0$, we can solve this equation analytically, and we obtain (see figure \ref{fig:deSitterGood})
\begin{equation}\label{deSitt}
\Tilde{\Pi}(\xi)= \dfrac{\theta}{1{+}\theta} \big[e^{-(1+\theta)\xi}-1 \big]\, .
\end{equation}
As can be seen, even in the limit of infinitely fast expansion ($\theta \rightarrow +\infty$), $\Tilde{\Pi}$ does not go below $-1$.
\newpage
\begin{figure}[h!]
    \centering
    \includegraphics[width=0.47\linewidth]{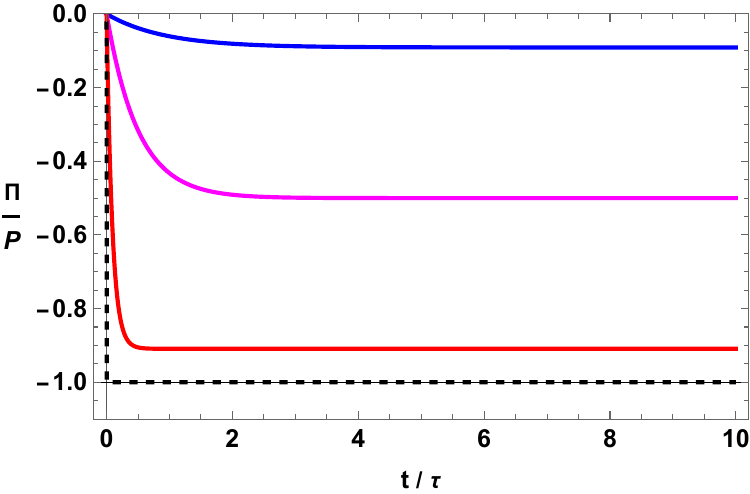}
    \caption{Evolution of the bulk pressure ratio $\Tilde{\Pi}=\Pi/P$ driven by a Hubble expansion, according to equation \eqref{deSitt}.  Each curve corresponds to a different choice of $\theta$, respectively $0.1$ (blue), $1$ (magenta), and $10$ (red). The dashed line corresponds to the limit where $\theta \rightarrow +\infty$. Again, the value of $\Pi/P$ never goes below $-1$, in accordance with Theorem \ref{theo1}.}
    \label{fig:deSitterGood}
\end{figure}

\subsection{Big Rip singularity}

For our last test, we consider a flow worldline $x^\alpha(t)$ that is approaching a ``Big-Rip'' singularity, namely a location where $\nabla_\mu u^\mu \rightarrow +\infty$  in finite proper time. We model this phenomenon as a simple divergence $a\nabla_\mu u^\mu(t)=(\tau-t)^{-1}$, so the singularity is located at $t=\tau$. Then, setting again $\xi=t/\tau$, we have the differential equation
\begin{equation}
\dfrac{d\Tilde{\Pi}}{d\xi}+ \Tilde{\Pi} = -\dfrac{1{+}\Tilde{\Pi}}{1-\xi}\, .
\end{equation}
Taking, as initial data, $\Tilde{\Pi}(0)=0$, we obtain (see figure \ref{fig:BigRip})
\begin{equation}\label{BigRip}
\Tilde{\Pi}(\xi)= -e^{-\xi } (\xi -1) \big[-e \text{Ei}(\xi -1)+e \text{Ei}(-1)+1\big]-1 \, .
\end{equation}
As can be seen, $\Tilde{\Pi}$ remains above $-1$ all the way to the singularity.

\begin{figure}[h!]
    \centering
    \includegraphics[width=0.47\linewidth]{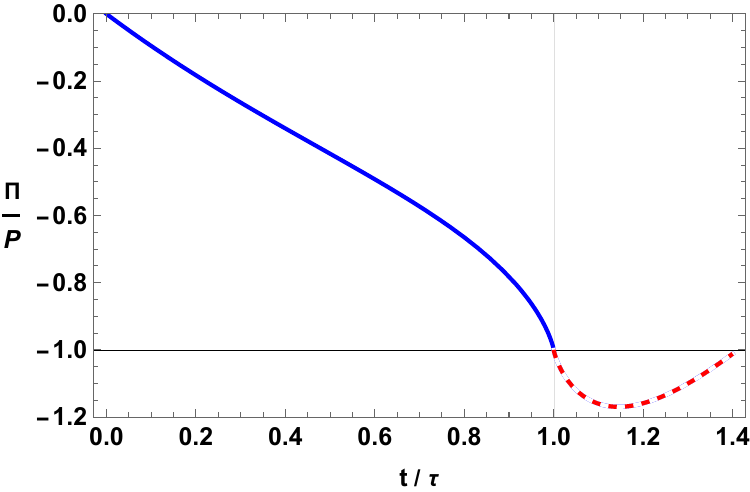}
    \caption{Evolution of the bulk pressure ratio $\Tilde{\Pi}=\Pi/P$ as it approaches the Big-Rip singularity, according to equation \eqref{BigRip}. Consistently with Theorem \ref{theo1}, $\Pi/P$ remains above $-1$ all the way to the singularity (which is located at $t/\tau=1$). When the singularity is reached, the fields are no longer $C^1$, and the theorem no longer applies. What comes next (red dashed) is unphysical, since the spacetime itself has ended.}
    \label{fig:BigRip}
\end{figure}

\newpage
\section{Viscous heating at large $\Pi$}
\vspace{-0.3cm}

We conclude the paper with an additional consistency check. We showed earlier that these polytopes obey the second law of thermodynamics also at large $\Pi$. This means that, if we make the fluid undergo a cycle of large compression and expansion, the system will heat up. Let us confirm that this is indeed true with a quick example. 

We consider a flow worldline $x^\alpha(\xi)$ (with $\xi=t / \tau$) that undergoes a periodic cycle of transformations, characterized by the following time-dependent density and expansion rate:
\vspace{-0.2cm}
\begin{equation}\label{violuzzo}
n(\xi)= \frac{1}{[2+\cos (\xi )]^3} \, , \spc \tau \nabla_\mu u^\mu (\xi) =-\dfrac{3\sin(\xi)}{2+\cos(\xi)} \, .
\end{equation}
This process is fast, since its timescale equals the relaxation time $\tau$. But it is also very ``intense'', since $n(\xi)$ oscillates between a maximum of $1$ and a minimum of $1/27\sim 0.03$.

We solve the equations of motion taking $\{m,K,\Gamma,a\}=\{1,1,4/3,1/3 \}$ and setting $\s(0)=0$. For this choice of parameters, the system admits an analytical solution (see figure \ref{fig:Heating}):
\vspace{-0.2cm}
\begin{equation}\label{CheLungone}
\begin{split}
P(\xi)={}& \frac{1}{[2+\cos (\xi )]^4} \, , \\
\Tilde{\Pi}(\xi)={}& \frac{\sin (\xi )-\cos (\xi )}{4+2 \cos (\xi )} \, , \\
\Tilde{\rho}(\xi)={}& \frac{-3 [\sin (\xi )+\sin (2 \xi )-42]+82 \cos (\xi )+8 \cos (2 \xi )+2 \sqrt{3} [\cos (\xi )+2]^2 \text{arcot} \left(\sqrt{3} \cot \left(\frac{\xi }{2}\right)\right)}{12 [2+\cos (\xi )]} \, , \\
\end{split}
\end{equation}
where the function ``$\text{arcot}$'' needs to be periodically shifted upward of an amount equal to $\pi$, to make $\Tilde{\rho}$ continuous. The figure agrees with the expectations: The net ``$(P{+}\Pi)dV$'' work over a cycle is non-zero, resulting in a net increase in thermal energy at the end of each cycle.

\begin{figure}[h!]
    \centering
    \includegraphics[width=0.43\linewidth]{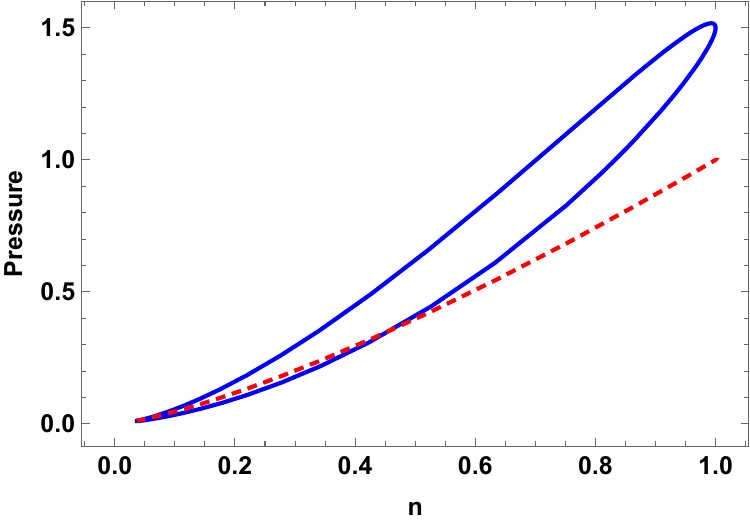}
    \includegraphics[width=0.43\linewidth]{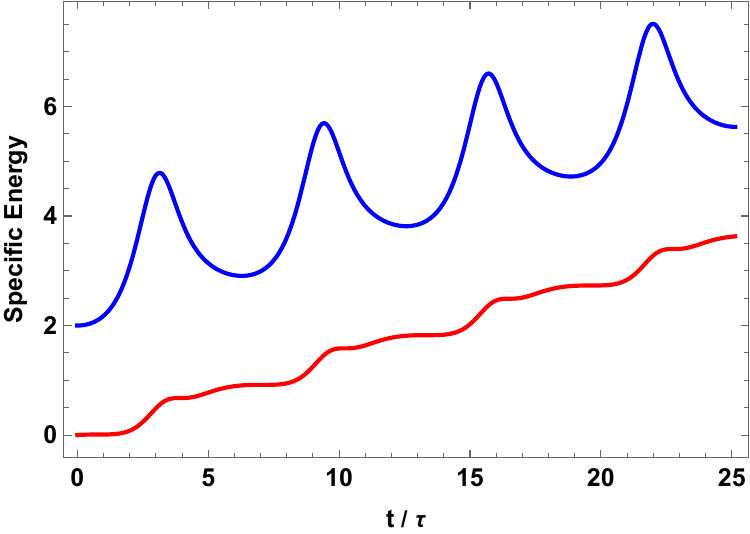}
\caption{Evolution of pressure and energy along a violent cycle of compression and expansion, according to equation \eqref{CheLungone}. Left panel: The total pressure $P{+}\Pi$ (blue) undergoes a hysteresis loop, and differs considerably from the equilibrium part $P$ (red). Right panel: After every cycle, the energy per particle $\rho/n$ (blue) is larger. This is a manifestation of the second law of thermodynamics, according to which $\s$, and thus the thermal energy $me_\text{th}(\s)$ (red), is non-decreasing in time.}
    \label{fig:Heating}
\end{figure}

\section{Conclusions}
\vspace{-0.4cm}

We have built a class of polytropic bulk-viscous fluid models that ``refuse'' to become acausal, even under extremely large expansion rates. The basic idea is simple: We created an infinitely tall energy barrier in the proximity of $\Pi=-P$ (see figure \ref{fig:xqx}), which ``repulses'' $\Pi$, keeping the fluid above the causality threshold (see figures \ref{fig:BjorkenGood}, \ref{fig:deSitterGood}, and \ref{fig:BigRip}). 

These models are ready to be simulated numerically. In fact, their initial value problem is locally well-posed at arbitrary density, temperature, and (positive) $P+\Pi$, also when coupled to Einstein's equations. The evolution equations also have the advantage of being fully irreversible, since they obey the second law of thermodynamics \textit{exactly}, unlike DNMR or BDNK, which respects the second law only approximately.

From a thermodynamic perspective, these polytopic models were constructed with cold and dense matter in mind. This manifests itself in the assumption that the equilibrium pressure does not depend on the temperature. In an upcoming paper, we will develop similar models for ``hot'' matter at zero chemical potential. Such models may find application in heavy-ion collision simulations.

\section*{Acknowledgements}

This work is partially supported by a Vanderbilt's Seeding Success Grant. I thank Marcelo Disconzi for reading the manuscript and providing useful feedback.

\appendix

\section{Computation of the information current}\label{appInfinuzzo}

We follow the same procedure as in \cite{GavassinoGENERIC2022,GavassinoStabilityCarter2022,GavassinoUniversalityII2023}. Specifically, we fix a background metric $g_{\mu \nu}$, and an external heat bath, with fugacity $\alpha^\star$ (which is a constant) and inverse-temperature four-vector $\beta^{\star \nu}$ (which is a Killing vector). Then, we consider a one-parameter family of states $\Psi(\epsilon)$, where $\epsilon=0$ is the (yet to be discovered) global equilibrium state of the fluid in contact with this bath. Then, we compute the first and second derivatives with respect to $\epsilon$ of the vector field $\phi^\mu=s^\mu +\alpha^\star J^\mu+\beta^\star_\nu T^{\mu \nu}$, where the hydrodynamic fluxes are functions of $\Psi(\epsilon)$ through \eqref{TheFluxioni}, while $\alpha^\star$, $\beta^\star_\nu$ and the metric $g_{\mu \nu}$ are held constant in the differentiation (i.e. they do not depend on $\epsilon$). Denoting by ``$\, \cdot \,$'' the differentiation in $\epsilon$ (not in time!), we have the following:
\begin{equation}
\begin{split}
\phi^\mu ={}& (K n^\Gamma{+}\Pi)\beta^{\star \mu} +\bigg[\s n +\alpha^\star n +u^\nu \beta^\star_\nu \bigg( mn(1{+}e_\text{th})+\dfrac{\Gamma Kn^\Gamma}{\Gamma{-}1} +\Pi +PQ    \bigg) \bigg] u^\mu \, ,\\
\dot{\phi}^\mu ={}& \big(\Gamma K n^{\Gamma-1}\dot{n}{+}\dot{\Pi}\big)\beta^{\star \mu} +\bigg[ n \dot{\s}+\s \dot{n} +\alpha^\star \dot{n} +\dot{u}^\nu \beta^\star_\nu \bigg( mn(1{+}e_\text{th})+\dfrac{\Gamma Kn^\Gamma}{\Gamma{-}1} +\Pi +PQ    \bigg) \\
+{}& u^\nu \beta^\star_\nu \bigg( m\dot{n}(1{+}e_\text{th})+mne_\text{th}'\dot{\s}+\dfrac{\Gamma^2 Kn^{\Gamma-1}}{\Gamma{-}1} \dot{n}+\dot{\Pi} +\dot{[PQ]}    \bigg) \bigg] u^\mu \\ 
+{}& \bigg[\s n +\alpha^\star n +u^\nu \beta^\star_\nu \bigg( mn(1{+}e_\text{th})+\dfrac{\Gamma K n^\Gamma}{\Gamma{-}1} +\Pi +PQ    \bigg) \bigg] \dot{u}^\mu \, ,\\
\ddot{\phi}^\mu ={}& \big(\Gamma K n^{\Gamma-1}\ddot{n}{+}\Gamma(\Gamma{-}1)Kn^{\Gamma-2}\dot{n}^2{+}\ddot{\Pi}\big)\beta^{\star \mu}\\+{}& \bigg[ n \ddot{\s}+\s \ddot{n}+2\dot{\s}\dot{n} +\alpha^\star \ddot{n} +\ddot{u}^\nu \beta^\star_\nu \bigg( mn(1{+}e_\text{th})+\dfrac{\Gamma Kn^\Gamma}{\Gamma{-}1} +\Pi +PQ    \bigg) \\
+{}& 2\dot{u}^\nu \beta^\star_\nu \bigg( m\dot{n}(1{+}e_\text{th})+mne_\text{th}'\dot{\s}+\dfrac{\Gamma^2 Kn^{\Gamma-1}}{\Gamma{-}1} \dot{n}+\dot{\Pi} +\dot{[PQ]}    \bigg)\\
+{}& u^\nu \beta^\star_\nu \bigg( m\ddot{n}(1{+}e_\text{th})+2m\dot{n}e_\text{th}'\dot{\s}+mne_\text{th}''\dot{\s}^2+mne_\text{th}'\ddot{\s}+\Gamma^2 Kn^{\Gamma-2} \dot{n}^2+\dfrac{\Gamma^2 Kn^{\Gamma-1}}{\Gamma{-}1} \ddot{n}+\ddot{\Pi} +\ddot{[PQ]}    \bigg) \bigg] u^\mu  \\+{}& 2\bigg[ n \dot{\s}+\s \dot{n} +\alpha^\star \dot{n} +\dot{u}^\nu \beta^\star_\nu \bigg( mn(1{+}e_\text{th})+\dfrac{\Gamma Kn^\Gamma}{\Gamma{-}1} +\Pi +PQ    \bigg) \\
+{}& u^\nu \beta^\star_\nu \bigg( m\dot{n}(1{+}e_\text{th})+mne_\text{th}'\dot{\s}+\dfrac{\Gamma^2 Kn^{\Gamma-1}}{\Gamma{-}1} \dot{n}+\dot{\Pi} +\dot{[PQ]}    \bigg) \bigg] \dot{u}^\mu \\ 
+{}& \bigg[\s n +\alpha^\star n +u^\nu \beta^\star_\nu \bigg( mn(1{+}e_\text{th})+\dfrac{\Gamma K n^\Gamma}{\Gamma{-}1} +\Pi +PQ    \bigg) \bigg] \ddot{u}^\mu \, .\\
\end{split}
\end{equation}
By definition, the equilibrium state $\Psi(0)$ is the state that maximizes the thermodynamic potential $\Phi=\int_\Sigma \phi^\mu d\Sigma_\mu$ across an arbitrary Cauchy surface $\Sigma$. This implies that $\dot{\phi}^\mu(0)$ should vanish for all possible $\dot{\Psi}(0)$. With a bit of algebra, one finds that this happens if and only if
\begin{equation}\label{Adue!}
\beta^\star_\nu =\dfrac{u_\nu}{T} \, , \spc T\alpha^\star=m(1{+}e_\text{th})+\dfrac{\Gamma K n^{\Gamma-1}}{\Gamma{-}1}-T\s \, , \spc \Pi=0 \, .
\end{equation}
The first condition is the usual statement that $u^\nu/T$ is a Killing vector in equilibrium \cite{BecattiniBeta2016} (a.k.a. the ``Tolman law''). The second relation is the ``Klein law'', namely $\mu/T=\text{const}$, where the chemical potential is the quantity on the right-hand side. The third condition is the obvious statement that global equilibrium implies local equilibrium.

If we evaluate the formula for $\ddot{\phi}$ at $\epsilon=0$, and use \eqref{Adue!}, we obtain
\begin{equation}
-\dfrac{1}{2} T\ddot{\phi}^\mu(0) = \bigg[ mne_\text{th}''\dot{\s}^2+\Gamma Kn^{\Gamma-2} \dot{n}^2 +\dfrac{\dot{\Pi}^2}{aKn^\Gamma}+ \bigg( mn(1{+}e_\text{th})+\dfrac{\Gamma Kn^\Gamma}{\Gamma{-}1}    \bigg) \dot{u}^\nu \dot{u}_\nu \bigg]\dfrac{u^\mu}{2}  +\big( \Gamma K n^{\Gamma-1} \dot{n}     +\dot{\Pi} \big) \dot{u}^\mu .
\end{equation}
Now, we recall that the information current is just $E^\mu =-\ddot{\phi}(0) \epsilon^2/2$, and that the linear perturbation to an arbitrary observable $A$ is given by $\delta A=\dot{A}(0)\epsilon$. The final result (after changing basis from $\delta \Psi$ to $ \delta \Tilde{\Psi}$) is equation \eqref{EEE}.

\section{Speed of information}\label{laspeedishere}

The speed of propagation of information $w$ is given by equation \eqref{w2speed}. The function $\Tilde{\rho}$ contains the term $mn(1+e_\text{th})/P$, which is always positive, and thus reduces the overall value of $w$. At high densities (for fixed values of $\s$ and $\Tilde{\Pi}$), this term disappears, since it scales like $1/n^{\Gamma-1}$, and $\Gamma>1$. Hence, we have that 
\begin{equation}
w^2 \leq w^2(n{=}\infty)= \frac{a+\Gamma-1}{1+\frac{a (1+\Tilde{\Pi})^{-\frac{a+\Gamma-1}{a}}}{(\Gamma-1) (a+\Gamma)}} \, ,
\end{equation}
which is smaller than 1, as long as $\{\Gamma,a\}$ falls inside the red triangle. This high-density speed is plotted in figure \ref{fig:Wona}.

\begin{figure}[h!]
    \centering
    \includegraphics[width=0.5\linewidth]{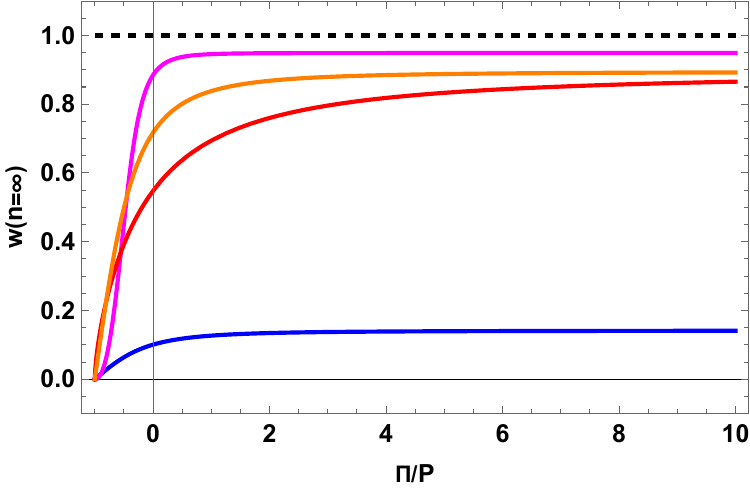}
    \caption{High-density speed of information $w(n=\infty)$ as a function of $\Tilde{\Pi}=\Pi/P$ for different values of $\{\Gamma,a \}$, respectively, blue:$\{1.01,0,01 \}$, magenta:$\{ 1.7,0.2\}$, red:$\{1.2,0.6 \}$, and orange:$\{1.4,0.4 \}$. The dashed line is the light-speed limit.}
    \label{fig:Wona}
\end{figure}

\section{Positive definiteness of $\mathcal{A}^0$}\label{AAA}

It is straightforward to see that the definiteness of $\mathcal{A}^0$ is invariant under rotations. Hence, we can arrange the axes so that $u^\mu=(\gamma,\gamma v,0,0)$, where $\gamma$ and $v$ are the Lorentz factor and speed of the fluid. Then, the matrix $\mathcal{A}^0$ becomes
\begin{equation}
\mathcal{A}^0=
\begin{bmatrix}
(1 {+}\Tilde{\Pi}{+}\Tilde{\rho})\gamma^{-1} &0 &0 & 0 & (1{+}\Tilde{\Pi})v & v \\
0 & (1 {+}\Tilde{\Pi}{+}\Tilde{\rho})\gamma & 0 &0  & 0 & 0 \\
0 & 0 & (1 {+}\Tilde{\Pi}{+}\Tilde{\rho})\gamma & 0 & 0 & 0 \\
0 & 0 & 0 & \gamma & 0 & 0 \\
(1{+}\Tilde{\Pi})v & 0& 0& 0 & (1{+}\Tilde{\Pi})\dfrac{\gamma}{\Gamma} & 0 \\
v & 0&  0 & 0 & 0 & \dfrac{\gamma}{a(1{+}\Tilde{\Pi})} \\
\end{bmatrix} \, .
\end{equation}
We can demonstrate positive definiteness by verifying that all the leading principal minors are positive, again by Sylvester's criterion \cite[Theorem 7.2.5]{horn_johnson_1985}. These are given by
\begin{equation}
\begin{split}
M_n ={}& (1 {+}\Tilde{\Pi}{+}\Tilde{\rho})^n\gamma^{n-2}  \spc (n=1,2,3)\, , \\
M_4 ={}&  (1 {+}\Tilde{\Pi}{+}\Tilde{\rho})^3\gamma^2 \, , \\
M_5 ={}& (1 {+}\Tilde{\Pi}{+}\Tilde{\rho})^3 \gamma^{3} \dfrac{1{+}\Tilde{\Pi}}{\Gamma} \bigg[1- \dfrac{\Gamma}{a{+}\Gamma} w^2 v^2 \bigg]  \, , \\
M_6 ={}& \dfrac{(1{+}\Tilde{\Pi}{+}\Tilde{\rho})^3\gamma^4 (1{-}w^2  v^2)}{a\Gamma}  \, . \\
\end{split}
\end{equation}
Recalling that $0<w^2<1$ by causality, and that $\{\Gamma,a \}$ falls within the red triangle of figure \ref{fig:allowedaG}, it is easy to see that all these minors are indeed positive.

\newpage
\bibliography{Biblio}

\label{lastpage}

\end{document}